\newcommand{\ms}[1]{} 
\newcommand{\pt}[1]{} 
\newcommand{\ba}{\begin{array}}
\newcommand{\ea}{\end{array}}
\newcommand{\bda}{\[\ba}
\newcommand{\eda}{\ea\]}
\newcommand{\nextL}[1]{\bigcirc \, #1}
\newcommand{\alwaysL}[1]{\Box\, #1}
\newcommand{\eventuallyL}[1]{\Diamond\, #1}
\newcommand{\untilL}[2]{#1\, \mathbf{U} \, #2}
\newcommand{\releaseL}[2]{#1\, \mathbf{R} \, #2}
\newcommand{\true}{\ensuremath{\mathbf{tt}}}
\newcommand{\false}{\ensuremath{\mathbf{ff}}}
\newcommand{\markL}[1]{#1*}
\newcommand\A{\mathcal{A}}
\renewcommand\L{\mathcal{L}}
\newcommand\Power{\mathbb{P}}
\newcommand\nat{\omega}
\newcommand\pderiv[2]{\partial_{#2}(#1)}
\newcommand\partDeriv[2]{{\mathit pd}_{#2}(#1)}
\newcommand\Angle[1]{\langle#1\rangle}
\newcommand\LF{\textsc{lf}}
\newcommand\AP{\mathbf{AP}}
\newcommand\INT{\mathcal{I}}
\newcommand\otop{{\textcircled{\raisebox{-0.5ex}{$\top$}}}}
\newcommand\pbfa\alpha
\newcommand\pbfb\beta
\newcommand\forma\varphi
\newcommand\formb\psi
\newcommand\formc\xi
\newcommand\symx{x}
\newcommand\symy{y}
\newcommand\scm\odot 
\newcommand\SET{\ensuremath{\mathcal{S}}}
\newcommand\SIMP{\ensuremath{\mathcal{T}}}
\newcommand\FORMA\Phi
\newcommand\FORMB\Psi
\newcommand\monoa\mu
\newcommand\monob\nu
\newcommand\Lit\ell
\newcommand{\sem}[1]{[\![#1]\!]}
\def\hyph{-\penalty0\hskip0pt\relax}
\newcommand{\myirule}[2]{{\renewcommand{\arraystretch}{1.2}\ba{c} #1
                      \\ \hline #2 \ea}}
\newcommand{\rlabel}[1]{\mbox{(#1)}}
\renewcommand{\owedge}{\wedge}
\renewcommand{\otop}{\true}
\begin{document}
\title{LTL Semantic Tableaux and Alternating $\omega$-automata via Linear Factors}

\author{Martin Sulzmann\inst1  \and Peter Thiemann\inst2}
 \institute{
   Karlsruhe University of Applied Sciences \\
   \email{martin.sulzmann@hs-karlsruhe.de}
   \and 
   Faculty of Engineering, University of Freiburg\\
   \email{thiemann@acm.org}
 }

\maketitle
\begin{abstract}
    Linear Temporal Logic (LTL) is a widely used specification framework for linear time properties of
  systems. The standard approach for verifying such properties is by transforming LTL formulae to suitable
  $\omega$-automata and then applying model checking. 
  We revisit Vardi's transformation of an LTL formula to an alternating $\omega$-automaton
  and Wolper's LTL tableau method for satisfiability checking.
  We observe that both constructions effectively rely on a decomposition of formulae
  into \emph{linear factors}.
  Linear factors have been introduced previously by Antimirov
  in the context of regular expressions.
  We establish the notion of linear factors for LTL and verify essential properties
  such as expansion and finiteness.
  Our results shed new insights on the connection
  between the construction of alternating $\omega$-automata
  and semantic tableaux.
\end{abstract}

\section{Introduction}
\label{sec:introduction}

Linear Temporal Logic (LTL) is a widely used specification framework
for linear time properties of systems. An LTL formula describes a
property of an infinite trace of a system. Besides the usual logical
operators, LTL supports the temporal operators $\nextL{\forma}$
($\forma$ holds in the next step of the trace) and
$\untilL\forma\formb$ ($\forma$ holds for all steps in the trace until
$\formb$ becomes true). LTL can describe many relevant safety and liveness properties.

The standard approach to verify a system against an LTL formula is
model checking. To this end, the verifier translates a formula into a
suitable $\omega$\hyph automaton, for example, a B\"uchi automaton or an
alternating automaton, and applies the model checking algorithm to the
system and the automaton. This kind of translation is widely studied
because it is the enabling technology for model checking
\cite{DBLP:conf/focs/WolperVS83,DBLP:journals/iandc/VardiW94,DBLP:conf/lics/VardiW86}.
Significant effort is spent on developing 
translations that generate (mostly) deterministic automata or that minimize the
number of states in the generated automata \cite{DBLP:conf/cav/GastinO01,DBLP:conf/tacas/BabiakKRS12}. Any
improvement in these dimensions is valuable as it speeds up the model
checking algorithm.

Our paper presents a new approach to understanding and proving the
correctness of Vardi's construction of alternating automata (AA)  from
LTL formulae~\cite{Vardi:1997:AAU:648233.753439}. Our approach is
based on a novel adaptation to LTL of 
linear factors, a concept from
Antimirov's construction of partial derivatives of regular
expressions~\cite{Antimirov96Partial}.  
Interestingly, a similar construction yields a new explanation
for Wolper's construction of semantic
tableaux~\cite{wolper85-ltl-tableau} for checking satisfiability of LTL formulae, thus
hinting at a deep connection between the two constructions.

The paper contains the following contributions.
\begin{itemize}
\item Definition of linear factors and partial derivatives for LTL
  formulae (Section~\ref{sec:linear-factors}).
   We establish their properties and prove correctness.
\item Transformation from LTL to AA based on linear factors. The
  resulting transformation is essentially the standard LTL to AA
  transformation \cite{Vardi:1997:AAU:648233.753439}; it is
  correct by construction of the linear factors (Section~\ref{sec:construction}).
\item Construction of semantic tableaux to determine satisfiability of
  LTL formulae using linear factors (Section~\ref{sec:semantic-tableau}). Our
  method corresponds closely to Wolper's construction and comes 
  with a free correctness proof.
\end{itemize}  

Proofs for results stated can be found in the appendix.

\subsection{Preliminaries}
\label{sec:preliminaries}

We write $\nat = \{0, 1, 2, \dots\}$ for the set of natural numbers
and $\Sigma^\omega$ for the set of infinite words over alphabet 
$\Sigma$ with symbols ranged over by $\symx,\symy \in \Sigma$. We regard a word
$\sigma\in\Sigma^\omega$ as a 
map and write $\sigma_i$ ($i\in\nat$) for the $i$-th symbol. For
$n\in\nat$, we write $\sigma[n\dots]$ for the 
suffix of $\sigma$ starting at $n$, that is, the function $i \mapsto
\sigma_{n+i}$, for $i\in\nat$.  We write $\symx\sigma$ for prepending
symbol $\symx$ to $\sigma$, that is, $(\symx\sigma)_0 = \symx$ and
$(\symx\sigma)_{i+1} = \sigma_i$, for all $i\in\nat$.
The notation $\Power (X)$ denotes the power set of $X$.

\section{Linear Temporal Logic}
\label{sec:ltl}

Linear temporal logic (LTL)~\cite{Pnueli:1977:TLP:1382431.1382534}
enhances propositional logic with the temporal operators $\nextL{\forma}$ ($\forma$ will be true in
the next step) and $\untilL\forma\formb$ ($\forma$ holds until $\formb$ becomes true). LTL formulae $\forma,\formb$ are
defined accordingly where we draw atomic propositions $p, q$ from a finite set $\AP$.
\begin{definition}[Syntax of LTL]
\begin{align*}
  \forma, \formb &::= p \mid \true \mid \neg\forma \mid \forma\wedge\formb \mid \nextL\forma \mid \untilL\forma\formb
\end{align*}
\end{definition}
We apply standard precedence rules to parse a formula ($\neg$,
$\nextL\forma$, and other prefix operators bind strongest; then
$\untilL\forma\formb$ and the upcoming $\releaseL\forma\formb$
operator; then 
conjunction and finally disjunction with the weakest binding
strength; as the latter are associative, we do not group their
operands explicitly). We use parentheses to deviate from precedence or to
emphasize grouping of subformulae.

A model of an LTL formula is an infinite word $\sigma \in \Sigma^\omega$ where
$\Sigma$ is a finite set of symbols and there is an interpretation $ \INT :
\Sigma \to \Power (\AP)$ that maps a symbol $x\in\Sigma$ to the finite set of atomic predicates that are true for $x$.

\begin{definition}[Semantics of LTL]
  The formula $\forma$ holds on word $\sigma \in \Sigma^\omega$ as
  defined by the judgment $\sigma  \models \forma$. 
\begin{align*}
 \sigma \models p
  &\Leftrightarrow p \in \INT (\sigma_0)
  \\
  \sigma \models \true
  &
  \\
  \sigma \models \neg \forma
  &\Leftrightarrow \sigma \not\models \forma
  \\
  \sigma \models \forma \wedge \formb
  & \Leftrightarrow \sigma \models \forma \text{ and } \sigma \models \formb
  \\
  \sigma \models \nextL{\forma}
  & \Leftrightarrow \sigma[1\dots] \models \forma
  \\
  \sigma \models \untilL{\forma}{\formb}
  & \Leftrightarrow \exists n \in \nat,
    (\forall j \in \nat, j<n \Rightarrow \sigma[j\dots] \models \forma)
    \text{ and } \sigma[n\dots] \models \formb
\end{align*}
We say $\forma$ is \emph{satisfiable} if there exists $\sigma \in \Sigma^\omega$ such that $\sigma \models \forma$.
\end{definition}

\begin{definition}[StandardDerived LTL Operators]
\begin{align*}
  \false &= \neg\true \\
  \forma\vee\formb& = \neg (\neg\forma\wedge \neg\formb) & \text{disjunction} \\
  \releaseL\forma\formb & = \neg( \untilL{\neg\forma}{\neg\formb}) & \text{release} \\
  \eventuallyL\formb & = \untilL\true\formb & \text{eventually/finally} \\
  \alwaysL\formb & = \releaseL\false\formb & \text{always/globally}
\end{align*}
\end{definition}

For many purposes, it is advantageous to restrict LTL formulae to positive normal form (PNF). 
In PNF negation only occurs adjacent to atomic propositions. Using the
derived operators, all negations can be pushed inside by using the de Morgan laws. Thanks to the release operator, this transformation runs in linear time and space.
The resulting grammar of formulae in PNF is as follows.

\begin{definition}[Positive Normal Form]
\bda{rcll}
\forma, \formb  & ::= & p \mid \neg p \mid \true \mid \false
\mid  \forma \wedge \formb \mid \forma \vee \formb
\mid \nextL{\forma} \mid \untilL{\forma}{\formb} \mid \releaseL \forma \formb
\eda
\end{definition}

From now on, we assume that all LTL formulae are in PNF.


We make use of several standard equivalences in LTL.
\begin{theorem}[Standard results about LTL]~\\[-\baselineskip]
\label{th:standard-ltl}  
  \begin{enumerate}
  \item $\nextL{(\forma \wedge \formb)} \Leftrightarrow (\nextL{\forma}) \wedge (\nextL{\formb})$
  \item $\nextL{(\forma \vee \formb)} \Leftrightarrow (\nextL{\forma}) \vee (\nextL{\formb})$
  \item $\untilL{\forma}{\formb} \Leftrightarrow
    \formb \vee (\forma \wedge \nextL{(\untilL{\forma}{\formb})})$
  \item $\releaseL{\forma}{\formb} \Leftrightarrow
    \formb \wedge( \forma
    \vee \nextL{(\releaseL{\forma}{\formb})})$
  \end{enumerate}
\end{theorem}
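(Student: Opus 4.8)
The plan is to prove all four equivalences directly from the semantic judgment $\sigma \models \cdot$, with no induction on the formula structure: each claim relates a formula to a fixed boolean-and-$\nextL$ combination of its immediate subformulae, so it suffices to unfold the relevant clauses of the semantics and reason in the meta-logic. Items (1) and (2) are then immediate. For (1): $\sigma \models \nextL{(\forma \wedge \formb)}$ iff $\sigma[1\dots] \models \forma \wedge \formb$ iff $\sigma[1\dots] \models \forma$ and $\sigma[1\dots] \models \formb$ iff $\sigma \models \nextL{\forma}$ and $\sigma \models \nextL{\formb}$ iff $\sigma \models (\nextL{\forma}) \wedge (\nextL{\formb})$. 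Item (2) is the same computation, either after expanding $\vee$ through its definition and the clause for $\neg$ (so that $\sigma[1\dots]\models\neg(\neg\forma\wedge\neg\formb)$ unfolds to $\sigma[1\dots]\models\forma$ or $\sigma[1\dots]\models\formb$), or by reading off the derived clause $\sigma \models \forma \vee \formb \Leftrightarrow \sigma\models\forma \text{ or } \sigma\models\formb$ directly.

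The substantive case is (3), the unfolding law for until. Unfolding the semantics, $\sigma \models \untilL{\forma}{\formb}$ iff there is $n \in \nat$ with $\sigma[j\dots] \models \forma$ for all $j < n$ and $\sigma[n\dots] \models \formb$; I would split on whether such a witness $n$ is $0$ or a successor. ($\Rightarrow$) Given a witness $n$: if $n = 0$ then $\sigma[0\dots] = \sigma \models \formb$, so the left disjunct holds; if $n = m+1$, then $j = 0 < n$ gives $\sigma \models \forma$, and $m$ witnesses $\sigma[1\dots] \models \untilL{\forma}{\formb}$ using the suffix identity $\sigma[1\dots][k\dots] = \sigma[(k+1)\dots]$ (for $k < m$, $k+1 < n$ gives $\sigma[(k+1)\dots]\models\forma$; for $k = m$, $\sigma[(m+1)\dots] = \sigma[n\dots]\models\formb$), so $\sigma \models \forma \wedge \nextL{(\untilL{\forma}{\formb})}$. ($\Leftarrow$) If $\sigma \models \formb$, take $n = 0$. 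If $\sigma \models \forma$ and $\sigma \models \nextL{(\untilL{\forma}{\formb})}$, pick a witness $m$ for $\sigma[1\dots] \models \untilL{\forma}{\formb}$ and set $n = m+1$, using $\sigma \models \forma$ for $j = 0$ and the same suffix identity for $1 \le j \le m$, and $\sigma[n\dots] = \sigma[1\dots][m\dots]\models\formb$.

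Finally, (4) follows from (3) by duality. Expanding $\releaseL{\forma}{\formb} = \neg(\untilL{\neg\forma}{\neg\formb})$, applying (3) to $\untilL{\neg\forma}{\neg\formb}$, and then pushing $\neg$ through $\vee$, $\wedge$, and $\nextL$ (the same distribution arguments as in (1)/(2), together with $\neg\neg\forma \equiv \forma$) yields the claimed equivalence. Alternatively one argues directly from the negated-until characterization: $\sigma \models \releaseL{\forma}{\formb}$ iff for every $n$, $\sigma[n\dots] \models \formb$ or there is $j < n$ with $\sigma[j\dots] \models \forma$, and one case-splits on whether $\sigma \models \forma$ (i.e., whether $n = 1$ already supplies the required $j$). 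The only real obstacle throughout is bookkeeping: keeping the suffix-shift identity straight when moving the until witness by one, and — if one takes the dual route for (4) — negating the nested $\exists/\forall$ block in the semantics of until carefully; neither is a genuine difficulty.
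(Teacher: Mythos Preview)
Your proof is correct and essentially the canonical argument. The paper, however, does not supply its own proof of this theorem: it is stated as a collection of ``standard results about LTL'' and invoked without justification, so there is nothing substantive to compare against. Your semantic unfolding for (1)--(3) and the duality derivation of (4) from (3) are exactly what one would expect; the only remark is that the paper, working in PNF, also records the direct model-theoretic characterization of release in the subsequent lemma, which is the content of your ``alternatively'' clause for (4).
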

We also make use of the direct definition of a model for the release operation.
\begin{lemma}
  $\sigma \models \releaseL\forma\formb$ is equivalent to one of the following:
  \begin{trivlist}
  \item
    $ \forall n \in \nat, ( \sigma[n\dots] \models \formb \text{ or } \exists j \in \nat, ((j<n) \wedge \sigma[j\dots]
    \models \forma)) $
  \item $\forall n \in \nat, \sigma[n\dots] \models \formb \text{ or }
    \exists j \in \nat, \sigma[j\dots] \models
    \forma \text{ and } \forall i \in \nat, i\le j \Rightarrow \sigma[i\dots] \models \formb$.
  \end{trivlist}
\end{lemma}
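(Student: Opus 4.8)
The plan is to work entirely from the derived definition $\releaseL\forma\formb = \neg(\untilL{\neg\forma}{\neg\formb})$ together with the given semantics of $\mathbf{U}$ and of $\neg$, rearranging quantifiers as needed. First I would show that $\sigma\models\releaseL\forma\formb$ is equivalent to the first alternative. Unfolding the definition of release gives $\sigma\not\models\untilL{\neg\forma}{\neg\formb}$, and expanding the semantics of $\mathbf{U}$ turns this into $\neg\,\exists n\in\nat,\ ((\forall j\in\nat,\ j<n\Rightarrow\sigma[j\dots]\models\neg\forma)\wedge\sigma[n\dots]\models\neg\formb)$. Pushing the negation inward with the classical De Morgan laws and the double-negation law $\sigma\not\models\neg\formc\Leftrightarrow\sigma\models\formc$ (immediate from the semantics of $\neg$) yields $\forall n\in\nat,\ ((\exists j\in\nat,\ j<n\wedge\sigma[j\dots]\models\forma)\vee\sigma[n\dots]\models\formb)$, that is, the first alternative up to commuting the disjunction. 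This step is pure mechanical unfolding.

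It then remains to show the first alternative is equivalent to the second; transitivity then gives $\sigma\models\releaseL\forma\formb\Leftrightarrow$ (second alternative). The key observation is that the right-hand disjunct of the second alternative, $\exists j\in\nat,\ (\sigma[j\dots]\models\forma\wedge\forall i\in\nat,\ i\le j\Rightarrow\sigma[i\dots]\models\formb)$, does not mention $n$; hence, by the propositional equivalence $\forall n\,(A_n\vee B)\Leftrightarrow(\forall n\,A_n)\vee B$ (valid since $\nat$ is nonempty), the second alternative is equivalent to the disjunction $(\forall n\in\nat,\ \sigma[n\dots]\models\formb)\ \vee\ (\exists j\in\nat,\ \sigma[j\dots]\models\forma\wedge\forall i\le j,\ \sigma[i\dots]\models\formb)$. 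For the direction from the first alternative to the second: if $\sigma[n\dots]\models\formb$ holds for all $n$, take the left disjunct; otherwise let $n_0$ be the least index with $\sigma[n_0\dots]\not\models\formb$ (well-ordering of $\nat$), instantiate the first alternative at $n_0$ to get $j<n_0$ with $\sigma[j\dots]\models\forma$, and observe that minimality of $n_0$ forces $\sigma[i\dots]\models\formb$ for all $i<n_0$, hence for all $i\le j$, so the right disjunct holds.

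For the converse direction, fix $n$: if $\sigma[n\dots]\models\formb$ we are done; otherwise the right disjunct of the second alternative supplies some $j$ with $\sigma[j\dots]\models\forma$ and $\sigma[i\dots]\models\formb$ for all $i\le j$, and if $j\ge n$ then $\sigma[n\dots]\models\formb$ would follow, contradicting the case assumption, so $j<n$ and the first alternative holds at $n$. The only step that calls for any care is this quantifier reshuffling — concretely, the appeal to a least index at which $\formb$ fails, equivalently the hoisting of the $n$-independent witness out of the leading $\forall n$; everything else is routine propositional and semantic unfolding, so I do not anticipate a genuine obstacle.
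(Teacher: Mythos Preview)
Your proof is correct. The paper does not actually supply a proof of this lemma; it is stated as a direct characterisation of the release semantics and then used freely (for instance in the proof of Theorem~\ref{th:ltl-to-aa}), so there is no ``paper's own proof'' to compare against. Your argument---unfolding $\releaseL\forma\formb = \neg(\untilL{\neg\forma}{\neg\formb})$ and pushing negations through to obtain the first formulation, then passing to the second via a least-failure-index argument---is exactly the standard derivation one would expect here, and every step checks out. The only remark worth making is on parsing: the second formulation in the lemma is ambiguous as to whether the leading $\forall n$ scopes over the disjunction or only over the first disjunct, but as you correctly observe, the right disjunct is $n$-free, so $\forall n\,(A_n \vee B) \Leftrightarrow (\forall n\,A_n) \vee B$ renders the two readings equivalent and your argument covers both.
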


\section{Linear Factors and Partial Derivatives}
\label{sec:linear-factors}

Antimirov~\cite{Antimirov96Partial} defines a linear factor of a
regular expression as a pair of an input symbol and a next formula (regular expression).
The analogue for LTL is a pair $\Angle{\monoa,\forma}$ where $\monoa$ is a propositional formula in monomial form (no
modalities, see Definition~\ref{def:monomial}) that models the set of
first symbols whereas $\forma$ is a formal conjunction of temporal LTL 
formulae for the rest of the input. Informally, 
$\Angle{\monoa, \forma}$ corresponds to $\monoa \wedge \nextL{\forma}$. A formula always gives rise to a set of linear
factors, which is interpreted as their disjunction.


\begin{definition}[Temporal Formulae, Literals and Monomials]\label{def:monomial}
A \emph{temporal formula} does not start with a conjunction
or a disjunction.
  
  A \emph{literal} $\Lit$ of $\AP$ is an element of $\AP \cup \neg\AP$. Negation of negative literals is
  defined by $\neg (\neg p) = p$.
  
  A \emph{monomial} $\monoa,\monob$ is either $\false$ or a set of literals of $\AP$ such that
  $\Lit\in\monoa$ implies $\neg \Lit \notin\monoa$.
  The formula associated with a monomial $\monoa$ is given by
  \begin{align*}
    \Theta (\monoa) &=
                      \begin{cases}
                        \false & \monoa = \false \\
                        \bigwedge \monoa & \monoa \text{ is a set of literals.}
                      \end{cases}
  \end{align*}
\end{definition}
In particular, if $\mu = \emptyset$, then $\Theta (\mu) = \true$. Hence, we may write $\true$
for the empty-set monomial.
As a monomial is represented either by $\false$ or by a set of
non-contradictory  literals, its representation is unique.

We define a smart conjunction operator on monomials that retains the monomial structure.

\begin{definition}
  Smart conjunction on monomials is defined as their union unless their conjunction $\Theta (\monoa)
  \wedge \Theta (\monob)$ is equivalent to $\false$.
  $$\monoa \scm \monob =
  \begin{cases}
    \false & \monoa = \false \vee \monob = \false \\
    \false & \exists \Lit \in \monoa\cup\monob.\ \neg \Lit \in \monoa\cup\monob \\
    \monoa \cup\monob & \text{otherwise.}
  \end{cases}
  $$
\end{definition}

Smart conjunction of monomials is correct in the sense that it produces results equivalent to the conjunction
of the associated formulae.
\begin{lemma}\label{lemma:conjunction-monomials}
  $\Theta(\monoa) \wedge \Theta(\monob) \Leftrightarrow \Theta(\monoa \scm \monob)$.
\end{lemma}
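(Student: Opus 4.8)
The plan is to prove $\Theta(\monoa) \wedge \Theta(\monob) \Leftrightarrow \Theta(\monoa \scm \monob)$ by a straightforward case analysis following the three-way definition of smart conjunction $\scm$. First I would dispose of the degenerate cases: if $\monoa = \false$ or $\monob = \false$, then $\Theta$ of the corresponding side is $\false$, so the left-hand side is equivalent to $\false$, and by definition $\monoa \scm \monob = \false$, so the right-hand side is literally $\false$ as well. Next, the contradictory case: if there is some literal $\Lit \in \monoa \cup \monob$ with $\neg\Lit \in \monoa\cup\monob$, then $\monoa$ and $\monob$ are both sets of literals (we have already handled $\false$), so $\Theta(\monoa) \wedge \Theta(\monob) = \bigwedge\monoa \wedge \bigwedge\monob = \bigwedge(\monoa\cup\monob)$ by associativity, commutativity, and idempotence of $\wedge$; this conjunction contains both $\Lit$ and $\neg\Lit$, hence is equivalent to $\false$, matching $\Theta(\monoa\scm\monob) = \Theta(\false) = \false$.

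The remaining \emph{otherwise} case is the substantive one, though still routine. Here $\monoa$ and $\monob$ are both sets of literals, their union contains no complementary pair, and $\monoa\scm\monob = \monoa\cup\monob$. I would argue that $\monoa\cup\monob$ is indeed a well-formed monomial (no complementary literals, by the case hypothesis), so that $\Theta(\monoa\cup\monob) = \bigwedge(\monoa\cup\monob)$ is defined. Then the equivalence $\Theta(\monoa)\wedge\Theta(\monob) = \bigwedge\monoa \wedge \bigwedge\monob \Leftrightarrow \bigwedge(\monoa\cup\monob) = \Theta(\monoa\cup\monob)$ follows purely from the semantics of conjunction over finite sets of literals. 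I should be mildly careful about the edge case where $\monoa$ or $\monob$ is the empty set: then $\Theta$ of it is $\true$, and $\true\wedge\Theta(\monob)\Leftrightarrow\Theta(\monob) = \Theta(\emptyset\cup\monob)$, which is consistent.

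I expect the only point requiring any care — it is hardly an obstacle — is making precise the claim that for a finite set of literals $S$, the formula $\bigwedge S$ is equivalent to $\false$ exactly when $S$ contains a complementary pair $\Lit, \neg\Lit$, and otherwise $\bigwedge(S \cup T) \Leftrightarrow (\bigwedge S) \wedge (\bigwedge T)$. Both facts are immediate from the LTL semantics of $\wedge$ and the interpretation of literals via $\INT$, since no temporal operators are involved and a model's satisfaction of a literal depends only on $\sigma_0$. With that lemma in hand, the three cases line up exactly with the three clauses of the definition of $\scm$, completing the proof.
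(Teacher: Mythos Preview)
Your proposal is correct. The paper actually does not supply a proof for this lemma in the appendix; it is stated as immediate from the definitions, and your three-case analysis following the clauses of $\scm$ is exactly the routine verification one would write out. There is nothing to compare against.
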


We define an operator $\SIMP$ that transforms propositional formulae
consisting of literals and temporal subformulae into sets of conjunctions.
We assume that  conjunction $\owedge$ simplifies formulae to normal
form using associativity, commutativity, and idempotence. The normal
form relies on a total ordering of formulae derived from an
(arbitrary, fixed) total ordering on atomic propositions. 

\begin{definition}[Set-Based Conjunctive Normal Form]
\begin{align*}
  \SIMP (\forma \wedge \formb) &= \{ \forma' \owedge \formb' \mid
                                 \forma' \in \SIMP(\forma), \formb'
                                 \in \SIMP (\formb) \} \\
  \SIMP (\forma \vee \formb) &= \SIMP(\forma) \cup \SIMP (\formb) \\
  \SIMP (\forma) &= \{\forma\} & \text{if $\forma$ is a temporal formula}
\end{align*}
\end{definition}

\begin{lemma}\label{lemma:simp-correct}
  $\bigvee (\SIMP\forma) \Leftrightarrow \forma$. 
\end{lemma}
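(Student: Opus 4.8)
\medskip
\noindent
The plan is to proceed by structural induction on $\forma$, viewed as a formula built from temporal formulae (the leaves, since by definition a temporal formula does not start with $\wedge$ or $\vee$) using the connectives $\owedge$ and $\vee$. Before the induction proper, observe that $\SIMP(\forma)$ is always a finite, nonempty set, so that $\bigvee(\SIMP\forma)$ denotes a genuine formula: a singleton for a temporal leaf, a union of nonempty sets in the disjunction case, and the pairwise-conjunction ``product'' of two nonempty sets in the conjunction case. In the base case $\forma$ is temporal, so $\SIMP(\forma) = \{\forma\}$ and $\bigvee(\SIMP\forma) = \forma$; the claim is then immediate.

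For the disjunction case $\forma = \formb \vee \formc$, unfolding the definition gives $\SIMP(\formb \vee \formc) = \SIMP(\formb) \cup \SIMP(\formc)$, hence $\bigvee(\SIMP(\formb\vee\formc))$ is equivalent to $\bigl(\bigvee \SIMP(\formb)\bigr) \vee \bigl(\bigvee \SIMP(\formc)\bigr)$ by associativity, commutativity and idempotence of $\vee$ (idempotence being needed in case the two sets overlap), and the induction hypotheses for $\formb$ and $\formc$ turn this into $\formb \vee \formc$. For the conjunction case $\forma = \formb \owedge \formc$, we have $\SIMP(\formb \owedge \formc) = \{ \formb' \owedge \formc' \mid \formb' \in \SIMP(\formb),\ \formc' \in \SIMP(\formc) \}$. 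Since the smart conjunction $\owedge$ only normalises modulo associativity, commutativity and idempotence, $\formb' \owedge \formc'$ is logically equivalent to $\formb' \wedge \formc'$; distributivity of $\wedge$ over $\vee$ then rearranges the doubly-indexed disjunction $\bigvee_{\formb',\formc'} (\formb' \wedge \formc')$ into $\bigl(\bigvee_{\formb'} \formb'\bigr) \wedge \bigl(\bigvee_{\formc'} \formc'\bigr)$, which is $\bigvee(\SIMP\formb) \wedge \bigvee(\SIMP\formc)$; the induction hypotheses finish the case.

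The only step that deserves care — and hence the main obstacle, though a mild one — is the distributive rearrangement in the conjunction case: one must check that a word $\sigma$ satisfies some disjunct $\formb' \wedge \formc'$ with $\formb' \in \SIMP(\formb)$ and $\formc' \in \SIMP(\formc)$ if and only if $\sigma$ satisfies some element of $\SIMP(\formb)$ and, independently, some element of $\SIMP(\formc)$. Everything else is a direct unfolding of the definition of $\SIMP$ together with the induction hypothesis and elementary propositional equivalences; no appeal to the temporal structure of the leaves is needed.
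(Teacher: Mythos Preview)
Your proof is correct. The paper does not actually spell out a proof of this lemma (it is stated without proof and not revisited in the appendix), so there is nothing to compare against beyond the implicit ``straightforward induction'' that the authors evidently had in mind; your structural induction on the $\wedge/\vee$-structure above temporal leaves is exactly that argument, carried out in full.
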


\begin{definition}[Linear Factors]
  The set of  \emph{linear factors $\LF({\forma})$ of an LTL formula} in PNF is
  defined as a set of pairs of a monomial and a PNF formula in conjunctive
  normal form.
  \bda{lcl}
  \LF (\Lit) & = & \{ \Angle{\{ \Lit \},\otop}\}
  \\
  \LF (\true) &= & \{ \Angle{\true, \otop} \}
  \\
  \LF (\false) &= & \{ \}
  \\
  \LF (\forma \vee \formb) & = & \LF (\forma) \cup \LF (\formb)
  \\
  \LF (\forma \wedge \formb) & = & \{ \Angle{\monoa',\forma' \owedge \formb'} \mid \Angle{\monoa,\forma'} \in
  \LF (\forma), \Angle{\monob,\formb'} \in \LF (\formb),
  \monoa' = \monoa \scm \monob \ne \false\}
  \\
  \LF (\nextL{\forma}) & = & \{ \Angle{\true,\forma'} \mid \forma' \in
  \SIMP(\forma)\}
  \\
  \LF (\untilL{\forma}{\formb}) & = & \LF (\formb) \cup \{
  \Angle{\monoa,\forma' \owedge \untilL{\forma}{\formb}} \mid
  \Angle{\monoa,\forma'} \in \LF (\forma) \}
  \\
  \LF (\releaseL{\forma}{\formb}) & = &
  \{  \Angle{\monoa', \forma' \owedge \formb'} \mid
  \Angle{\monoa, \forma'} \in \LF (\forma),
  \Angle{\monob, \formb'} \in \LF (\formb),
  \monoa' = \monoa \scm \monob \ne \false
  \} \\ &\cup& \{
  \Angle{\monob, \formb' \owedge \releaseL{\forma}{\formb}} \mid
  \Angle{\monob,\formb'} \in \LF (\formb)\}
  \eda
\end{definition}
By construction, the first component of a linear factor is never
$\false$. Such pairs are eliminated from the beginning by the tests
for $\monoa \scm \monob \ne \false$.

We can obtain shortcuts for the derived operators ``always'' and ``eventually''.
\begin{lemma}~\\[-2.7\baselineskip]
  \begin{align*}
    \LF (\eventuallyL{\formb}) & =  \LF (\formb) \cup \{
                                     \Angle{\true,\eventuallyL{\formb}} \}
    \\
    \LF (\alwaysL{\formb})
                               & =  \{ \Angle{\monob,\formb' \owedge
                                 \alwaysL{\formb}} \mid \Angle{\monob,\formb'} \in \LF (\formb) \}
  \end{align*}
\end{lemma}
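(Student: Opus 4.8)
The plan is to prove the two identities by unfolding the definition of $\LF$ on the derived operators $\eventuallyL{\formb} = \untilL{\true}{\formb}$ and $\alwaysL{\formb} = \releaseL{\false}{\formb}$, and then simplifying using the trivial linear factors $\LF(\true) = \{\Angle{\true,\otop}\}$ and $\LF(\false) = \{\}$. Both computations are direct, so the main work is bookkeeping with monomials and the set-based conjunctive normal form operator $\owedge$; I expect no real obstacle, only the need to be careful about the neutral role of $\otop = \true$ under $\owedge$.

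For the first identity, I would start from $\LF(\eventuallyL{\formb}) = \LF(\untilL{\true}{\formb}) = \LF(\formb) \cup \{\Angle{\monoa, \forma' \owedge \untilL{\true}{\formb}} \mid \Angle{\monoa,\forma'} \in \LF(\true)\}$. Since $\LF(\true) = \{\Angle{\true,\otop}\}$, the second set has exactly one member, namely $\Angle{\true, \otop \owedge \eventuallyL{\formb}}$. I then need the (routine) fact that $\otop \owedge \formc$ normalizes to $\formc$, i.e. that $\true$ is the unit of the smart conjunction $\owedge$ on normal-form formulae; granting this, the member is $\Angle{\true, \eventuallyL{\formb}}$, yielding $\LF(\eventuallyL{\formb}) = \LF(\formb) \cup \{\Angle{\true,\eventuallyL{\formb}}\}$ as claimed.

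For the second identity, I would unfold $\LF(\alwaysL{\formb}) = \LF(\releaseL{\false}{\formb})$. The release clause has two parts. In the first part, $\Angle{\monoa,\forma'}$ ranges over $\LF(\false) = \{\}$, so that part is empty. The second part is $\{\Angle{\monob, \formb' \owedge \releaseL{\false}{\formb}} \mid \Angle{\monob,\formb'} \in \LF(\formb)\}$, which is exactly $\{\Angle{\monob, \formb' \owedge \alwaysL{\formb}} \mid \Angle{\monob,\formb'} \in \LF(\formb)\}$ after rewriting $\releaseL{\false}{\formb}$ back to $\alwaysL{\formb}$. Hence the union equals the desired set.

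The only step that warrants explicit attention is the handling of $\otop$ as the unit of $\owedge$: I would either observe that the normalization underlying $\owedge$ (associativity, commutativity, idempotence) treats $\true$ as the identity element, or, if that is not already assumed, note that $\Angle{\monoa, \forma' \owedge \true}$ and $\Angle{\monoa,\forma'}$ denote the same linear factor since $\forma' \owedge \true \Leftrightarrow \forma'$, and that $\otop$ appears only as a placeholder "rest" that can be absorbed. Beyond this, both proofs are a one-line substitution into the definition, so I would present them compactly rather than belabor the set manipulations.
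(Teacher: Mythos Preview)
Your proposal is correct and is exactly the intended argument: the paper states this lemma as an immediate ``shortcut'' obtained by unfolding the definition of $\LF$ on $\untilL{\true}{\formb}$ and $\releaseL{\false}{\formb}$, and does not spell out a separate proof. Your only caveat about $\otop$ being the unit of $\owedge$ is indeed assumed by the paper's normalization convention (and is used without comment in the example immediately following the lemma).
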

\begin{example}
  Consider the formula $\alwaysL{\eventuallyL{p}}$.
  \begin{align*}
    \LF (\eventuallyL{p})
    &=\LF (p) \cup \{\Angle{\true,\eventuallyL{p}} \}\\
    &= \{ \Angle{p, \otop}, \Angle{\true, \eventuallyL{p}} \}
    \\
    \LF (\alwaysL{\eventuallyL{p}})
    &= \{ \Angle{\monoa,\forma' \owedge
      \alwaysL{\eventuallyL{p}}} \mid \Angle{\monoa,\forma'} \in \LF (\eventuallyL{p}) \} \\
    &= \{ \Angle{\monoa,\forma' \owedge
      \alwaysL{\eventuallyL{p}}} \mid \Angle{\monoa,\forma'} \in \{ \Angle{p, \otop}, \Angle{\true, \eventuallyL{p}} \} \} \\
    &= \{ \Angle{p, \alwaysL{\eventuallyL{p}}}, \Angle{\true, \eventuallyL{p} \owedge \alwaysL{\eventuallyL{p}}} \}
  \end{align*}
\end{example}

\begin{definition}[Linear Forms]
  A formula $\forma = \bigvee_{i\in I} b_i \wedge \nextL{\forma_i}$ is
  in \emph{linear form} if each $b_i$ is a conjunction of literals and
  each $\forma_i$ is a temporal formula. 
  
  The formula associated to a set of linear factors is
  in linear form as given by the following mapping.
  \begin{align*}
    \Theta  ( \{ \Angle{\monoa_i, \forma_i} \mid i \in I \})
    & = \bigvee_{i \in I} (\Theta(\monoa_i) \wedge \nextL{\forma_i})
  \end{align*}
\end{definition}

Each formula can be represented in linear form by applying the
transformation to linear factors.  The expansion theorem states the
correctness of this transformation.


\begin{theorem}[Expansion]\label{theorem:expansion}
  For all $\forma$, $\Theta (\LF (\forma)) \Leftrightarrow \forma$.
\end{theorem}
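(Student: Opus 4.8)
The plan is to prove $\Theta(\LF(\forma)) \Leftrightarrow \forma$ by \emph{structural induction} on the PNF formula $\forma$, reading $\LF$ off its recursive definition and massaging the result up to LTL equivalence. Before the case analysis I would collect a handful of routine facts used throughout: (i) $\nextL{\true} \Leftrightarrow \true$; (ii) $\nextL$ distributes over finite conjunctions and disjunctions, so $\nextL{(\forma' \owedge \formb')} \Leftrightarrow \nextL{\forma'} \wedge \nextL{\formb'}$ and $\nextL{(\bigvee_i \forma_i)} \Leftrightarrow \bigvee_i \nextL{\forma_i}$ (Theorem~\ref{th:standard-ltl}(1,2), iterated); (iii) $\Theta$ sends union of linear-factor sets to disjunction, $\Theta(S \cup T) \Leftrightarrow \Theta(S) \vee \Theta(T)$, immediate from the definition of $\Theta$ on sets plus idempotence of $\vee$; and (iv) the normalising operator $\owedge$ denotes ordinary conjunction, so the syntactic normalisation it performs (likewise that of $\SIMP$) never disturbs a semantic argument. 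I also use the two soundness lemmas already established: $\Theta(\monoa) \wedge \Theta(\monob) \Leftrightarrow \Theta(\monoa \scm \monob)$ (Lemma~\ref{lemma:conjunction-monomials}) and $\bigvee(\SIMP\forma) \Leftrightarrow \forma$ (Lemma~\ref{lemma:simp-correct}).

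The base cases are direct evaluations: $\Theta(\LF(\Lit)) = \Lit \wedge \nextL{\true} \Leftrightarrow \Lit$ by (i), $\Theta(\LF(\true)) = \true \wedge \nextL{\true} \Leftrightarrow \true$, and $\Theta(\LF(\false)) = \bigvee \emptyset = \false$. The disjunction case $\forma \vee \formb$ is (iii) together with the two induction hypotheses. The case $\nextL{\forma}$ does not recurse through $\LF$: $\Theta(\LF(\nextL\forma)) = \bigvee_{\forma' \in \SIMP(\forma)}(\true \wedge \nextL{\forma'}) \Leftrightarrow \nextL{(\bigvee \SIMP(\forma))} \Leftrightarrow \nextL{\forma}$ by (i), (ii) and Lemma~\ref{lemma:simp-correct}.

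The remaining cases $\forma \wedge \formb$, $\untilL\forma\formb$ and $\releaseL\forma\formb$ share one computational pattern: rewrite $\forma, \formb$ by the induction hypotheses into linear form $\Theta(\LF(\forma)) = \bigvee_i(\Theta(\monoa_i) \wedge \nextL{\forma_i})$, $\Theta(\LF(\formb)) = \bigvee_j(\Theta(\monob_j) \wedge \nextL{\formb_j})$, distribute the outer Boolean connective over these disjunctions, and recombine. For $\forma \wedge \formb$: distributing $\wedge$ turns each cross term $\Theta(\monoa_i) \wedge \nextL{\forma_i} \wedge \Theta(\monob_j) \wedge \nextL{\formb_j}$ into $\Theta(\monoa_i \scm \monob_j) \wedge \nextL{(\forma_i \owedge \formb_j)}$ by Lemma~\ref{lemma:conjunction-monomials} and (ii)/(iv); the cross terms with $\monoa_i \scm \monob_j = \false$ denote $\false$ and vanish, and what is left is literally $\Theta(\LF(\forma \wedge \formb))$. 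For $\untilL\forma\formb$: by (iii) the linear form of $\LF(\untilL\forma\formb)$ is $\Theta(\LF(\formb))$ in disjunction with $\bigvee_{\Angle{\monoa,\forma'} \in \LF(\forma)}(\Theta(\monoa) \wedge \nextL{\forma'} \wedge \nextL{(\untilL\forma\formb)})$; pulling the common factor $\nextL{(\untilL\forma\formb)}$ out of the second disjunction and applying the induction hypotheses gives $\formb \vee (\forma \wedge \nextL{(\untilL\forma\formb)})$, equal to $\untilL\forma\formb$ by Theorem~\ref{th:standard-ltl}(3). The case $\releaseL\forma\formb$ combines both moves: its first family of linear factors yields $\forma \wedge \formb$ exactly as in the conjunction case (now keeping the $\false$ cross terms is harmless), its second family yields $\formb \wedge \nextL{(\releaseL\forma\formb)}$, and $(\forma \wedge \formb) \vee (\formb \wedge \nextL{(\releaseL\forma\formb)}) \Leftrightarrow \formb \wedge (\forma \vee \nextL{(\releaseL\forma\formb)}) \Leftrightarrow \releaseL\forma\formb$ by distributivity and Theorem~\ref{th:standard-ltl}(4).

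The only step needing genuine care, as opposed to bookkeeping, is justifying in the conjunction and release cases that passing from ``disjunction over all pairs of linear factors'' to ``disjunction over pairs with $\monoa \scm \monob \ne \false$'' preserves the denotation; this rests on $\Theta(\false) = \false$ and on the soundness of $\scm$ (Lemma~\ref{lemma:conjunction-monomials}). Otherwise the argument is just careful propositional rearrangement plus the four expansion equivalences of Theorem~\ref{th:standard-ltl}, and it goes through with no surprises, provided one keeps stating explicitly that the syntactic normalisations hidden in $\owedge$ and $\SIMP$ do not change the formula denoted, so that every manipulation may legitimately be carried out modulo LTL equivalence.
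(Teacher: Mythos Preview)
Your proposal is correct and follows essentially the same approach as the paper's proof: structural induction on the PNF formula, handling the base cases by direct computation, the Boolean and $\nextL$ cases via Lemmas~\ref{lemma:conjunction-monomials} and~\ref{lemma:simp-correct} together with the distributivity of $\nextL$ over $\wedge$/$\vee$, and the $\mathbf{U}$/$\mathbf{R}$ cases by factoring out the recursive subformula and invoking the fixed-point unfoldings of Theorem~\ref{th:standard-ltl}. Your presentation is somewhat tidier in that you isolate the auxiliary facts (i)--(iv) up front, but the substance and the order of the case analysis match the paper almost step for step.
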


The partial derivative of a formula $\forma$ with respect to a symbol
$x\in\Sigma$ is a set of formulae $\FORMB$ such that  $x \sigma
\models \forma$ if and only if $\sigma \models \bigvee\FORMB$. 
Partial derivatives only need to be defined for formal conjunctions of
temporal formulae as we can apply the  $\SIMP$ operator first.

\begin{definition}[Partial Derivatives]
  The partial derivative of a formal conjunction of temporal formulae with respect
  to a symbol $\symx\in\Sigma$ is defined by
  \begin{align*}
    \pderiv{\forma}{\symx} &= \{ \forma' \mid \Angle{\monoa, \forma'} \in
                         \LF (\forma), \symx \models \monoa \}
                         & \text{ if $\forma$ is a temporal formula}\\
    \pderiv{\otop}{\symx} &= \{ \otop \} \\
    \pderiv{\forma\owedge\formb}{\symx}
                       &= \{ \forma' \owedge \formb' \mid \forma' \in
                         \pderiv\forma \symx, \formb' \in \pderiv\formb \symx
                       \}
                         \text.
  \end{align*}
\end{definition}

\begin{example}
  Continuing the example of $\alwaysL{\eventuallyL{p}}$, we find for $\symx\in\Sigma$:
  \begin{align*}
    \pderiv {\alwaysL {\eventuallyL{p}}} x
    &= \{ \forma' \mid \Angle{\monoa, \forma'} \in
                         \LF (\alwaysL {\eventuallyL{p}}), x \models \monoa \} \\
    &= \{ \forma' \mid \Angle{\monoa, \forma'} \in
                        \{ \Angle{p, \alwaysL{\eventuallyL{p}}}, \Angle{\true, \eventuallyL{p} \owedge
      \alwaysL{\eventuallyL{p}}} \}, x \models \monoa \} \\ 
    &=
      \begin{cases}
        \{ \alwaysL{\eventuallyL{p}},  \eventuallyL{p} \owedge
        \alwaysL{\eventuallyL{p}} \} & p \in \INT (x) \\
        \{ \eventuallyL{p} \owedge
        \alwaysL{\eventuallyL{p}}\} & p \notin \INT (x)
      \end{cases}
  \end{align*}
  As it is sufficient to define the derivative for temporal formulae, it only remains to explore the definition of
  $\pderiv{\eventuallyL p} x$.
  \begin{align*}
    \pderiv{\eventuallyL p} x
    &= \{ \forma' \mid \Angle{\monoa, \forma'} \in
      \LF (\eventuallyL p), x \models \monoa \} \\
    &= \{ \forma' \mid \Angle{\monoa, \forma'} \in
      \{ \Angle{p, \otop}, \Angle{\true, \eventuallyL{p}} \}, x \models \monoa \} \\
    &=
      \begin{cases}
        \{ \otop, \eventuallyL{p} \} & p \in \INT (x) \\
        \{ \eventuallyL{p} \} & p \notin \INT (x)
      \end{cases}
  \end{align*}
\end{example}


\subsection{Properties of Partial Derivatives}

A descendant of a formula is either the formula itself or an element
of the partial derivative of a descendant by some symbol.
As in the regular expression case, the set of descendants of a fixed
LTL formula is finite. Our finiteness proof follows the method  suggested 
by Broda and coworkers~\cite{partial-derivative-plain-shuffle}.
We look at the set of iterated partial derivatives of a formula $\forma$,
which turns out to be just the set of temporal subformulae of $\forma$.
This set is finite and closed under the partial derivative operation. Thus, finiteness follows.

\begin{definition}[Iterated Partial Derivatives]
\bda{lcl}
\partial^+ (\Lit) &=&  \{ \Lit \}
\\
\partial^+ (\true) &=&  \{ \true \}
\\
\partial^+ (\false) &=&  \{\false \}
\\
\partial^+ (\forma \vee \formb) &=& \partial^+ (\forma)\cup \partial^+ (\formb)
\\
\partial^+ (\forma \wedge \formb) &=& \partial^+ (\forma) \cup \partial^+ (\formb) 
\\
\partial^+ (\nextL \forma) &=&  \{\nextL\forma\} \cup \partial^+ (\forma) 
\\
\partial^+ (\eventuallyL{\forma}) &=&  \{ \eventuallyL{\forma} \} \cup \partial^+ (\forma) 
\\
\partial^+ (\alwaysL{\forma}) &=& \{ \alwaysL \forma \} \cup \partial^+ (\forma) 
\\
\partial^+ (\untilL{\forma}{\formb}) &=& \{ \untilL{\forma}{\formb} \} \cup \partial^+ (\formb) \cup \partial^+ (\forma)
\\
\partial^+ (\releaseL{\forma}{\formb}) &=& \{ \releaseL{\forma}{\formb} \} \cup  \partial^+ (\formb) \cup \partial^+ (\forma)
\eda
\end{definition}

It is trivial to see that the set $\partial^+ (\forma)$ is finite
because it is a subset of the set of subformulae of $\forma$.
\begin{lemma}[Finiteness]
\label{le:finiteness}
  For all $\forma$, $\partial^+ (\forma)$ is finite.
\end{lemma}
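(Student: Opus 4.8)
The plan is a straightforward structural induction on $\forma$, strengthening the statement slightly to: $\partial^+(\forma)$ is a subset of the (finite) set of subformulae of $\forma$, where here ``subformula'' is taken up to the normalisation conventions already in force (associativity, commutativity, idempotence of $\owedge$, and the fact that monomials and temporal formulae have unique representations). Finiteness of $\partial^+(\forma)$ is then immediate, since a formula has only finitely many subformulae.

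First I would dispatch the base cases $\forma \in \{\Lit, \true, \false\}$, where by definition $\partial^+(\forma) = \{\forma\}$, which is trivially a finite subset of the subformulae of $\forma$. For the inductive step I would go through each clause of Definition (Iterated Partial Derivatives). In the binary propositional cases $\forma \vee \formb$ and $\forma \wedge \formb$ we have $\partial^+ = \partial^+(\forma) \cup \partial^+(\formb)$, so by the induction hypothesis this is a finite set contained in the subformulae of $\forma$ together with those of $\formb$, hence in those of $\forma \vee \formb$ (resp. $\forma \wedge \formb$). In each temporal case ($\nextL\forma$, $\eventuallyL\forma$, $\alwaysL\forma$, $\untilL\forma\formb$, $\releaseL\forma\formb$) the right-hand side adds only the single formula being analysed to a union of strictly smaller iterated-derivative sets, so again the induction hypothesis plus closure under union gives a finite set of subformulae.

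I do not expect any genuine obstacle here: the only point worth a sentence is to note that $\partial^+$ is defined so as never to introduce a formula that is not already a syntactic subformula of its argument — in particular it never applies $\SIMP$ or $\scm$ and never unfolds an until/release operator — which is precisely why the naive subformula bound works, in contrast to $\partial$ itself. Consequently no quantitative bookkeeping is needed: the cardinality bound $|\partial^+(\forma)| \le |\mathrm{sub}(\forma)|$ follows directly, and finiteness of $\mathrm{sub}(\forma)$ is an elementary induction that I would either cite as folklore or carry out in one line.
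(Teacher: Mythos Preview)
Your proposal is correct and matches the paper's approach exactly: the paper remarks just before the lemma that $\partial^+(\forma)$ is trivially finite because it is a subset of the subformulae of $\forma$, and the proof in the appendix is simply ``by straightforward induction on the linear temporal formula.'' Your write-up is a more detailed version of the same argument.
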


The iterated partial derivative only consider subformulae
whereas the partial derivative elides disjunctions but  returns a set of formal conjunctions.
To connect both the following definition is required.

\begin{definition}[Subsets of Formal Conjunctions]
  For an ordered set $X = \{ x_1, x_2, \dots \}$, we define the set of all formal conjunctions of $X$
  as follows.
\begin{align*}
  \SET (X) &=
                  \{ \otop \} \cup \{ x_{i_1} \owedge \dots \owedge
             x_{i_n}  \mid n\ge 1, i_1 < i_2 < \dots < i_n \}
\end{align*}
We regard a subset of $\SET (X)$ as a positive Boolean formula over
$X$ in conjunctive normal form. 
\end{definition}
Clearly, if a set of formulae $\FORMA$ is
finite, then so is $\SET (\FORMA)$, where we assume an arbitrary, but
fixed total ordering on formulae.

The set of temporal subformulae of a given formula $\forma$ is also a formal conjunction of subformulae.
\begin{lemma}\label{lemma:simp-delta-plus}
  For all $\forma$, $\SIMP (\forma) \subseteq \SET (\partial^+ (\forma))$.
\end{lemma}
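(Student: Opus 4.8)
The plan is to prove $\SIMP(\forma) \subseteq \SET(\partial^+(\forma))$ by structural induction on the PNF formula $\forma$. Before starting the induction I would isolate two elementary properties of the operator $\SET$. Since $\owedge$ is maintained in normal form under associativity, commutativity, and idempotence, and $\SET(X)$ is defined as the set consisting of $\otop$ together with all \emph{ordered} conjunctions of elements of $X$, an element of $\SET(X)$ is precisely (the normal form of) a conjunction whose conjuncts form a finite subset of $X$, and this description is independent of the enumeration of $X$. From it one reads off at once: (i) $\SET$ is monotone, i.e.\ $X \subseteq Y$ implies $\SET(X) \subseteq \SET(Y)$; and (ii) $\SET$ is closed under conjunction, i.e.\ $s \in \SET(X)$ and $t \in \SET(Y)$ imply $s \owedge t \in \SET(X \cup Y)$ --- the only non-trivial subcase being when $s$ and $t$ are both non-empty conjunctions, over subsets $A \subseteq X$ and $B \subseteq Y$ respectively, in which case $s \owedge t$ normalizes to the conjunction over $A \cup B \subseteq X \cup Y$.

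The induction then splits into the three cases of the definition of $\SIMP$, which also exhaust PNF: $\forma$ temporal, $\forma = \forma_1 \vee \forma_2$, and $\forma = \forma_1 \wedge \forma_2$. If $\forma$ is a temporal formula, then $\SIMP(\forma) = \{\forma\}$, and inspecting the clauses of $\partial^+$ shows $\forma \in \partial^+(\forma)$ in every temporal case (literals, $\true$, $\false$, and the modal cases $\nextL{}$, $\untilL{}{}$, $\releaseL{}{}$, as well as the derived $\eventuallyL{}$, $\alwaysL{}$); hence $\forma$, regarded as a one-conjunct conjunction, belongs to $\SET(\partial^+(\forma))$. If $\forma = \forma_1 \vee \forma_2$, then $\SIMP(\forma) = \SIMP(\forma_1) \cup \SIMP(\forma_2)$, which by the induction hypothesis is contained in $\SET(\partial^+(\forma_1)) \cup \SET(\partial^+(\forma_2))$; since $\partial^+(\forma) = \partial^+(\forma_1) \cup \partial^+(\forma_2)$ contains each $\partial^+(\forma_i)$, monotonicity~(i) gives the claim. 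If $\forma = \forma_1 \wedge \forma_2$, then every element of $\SIMP(\forma)$ has the form $\forma_1' \owedge \forma_2'$ with $\forma_i' \in \SIMP(\forma_i) \subseteq \SET(\partial^+(\forma_i))$ by the induction hypothesis, and closure under conjunction~(ii) places $\forma_1' \owedge \forma_2'$ in $\SET(\partial^+(\forma_1) \cup \partial^+(\forma_2)) = \SET(\partial^+(\forma))$.

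The main --- indeed, essentially the only --- delicate point is making properties (i) and (ii) precise: one has to justify the identification of $\SET(X)$, which is literally a set of ordered conjunctions, with ``the normalized conjunctions over finite subsets of $X$'', as this is what allows one to read $s \owedge t$ as the conjunction over a union of subsets. This is exactly where the stated ACI-normalization of $\owedge$ relative to a fixed total order on formulae is used; once it is granted, the rest is routine bookkeeping. The convention $\otop = \true$ causes no trouble, since $\true$ is itself a temporal formula, so $\SIMP(\true) = \{\otop\} \subseteq \SET(\{\true\}) = \SET(\partial^+(\true))$ falls under the base case, and the $\otop$-subcases of (ii) are handled trivially via $\otop \owedge t = t$.
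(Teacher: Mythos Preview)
Your proof is correct and follows the same approach as the paper, which dispatches the lemma with a single line: ``By straightforward induction on the linear temporal formula.'' You have simply spelled out the induction in full, including the auxiliary monotonicity and closure-under-$\owedge$ properties of $\SET$ that justify the inductive steps for $\vee$ and $\wedge$; these are exactly the ingredients the paper relies on implicitly.
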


\begin{lemma}[Closedness under derivation] \label{le:closedness}~\\[-\baselineskip]
  \begin{enumerate}
  \item For all $x\in\Sigma$, $\pderiv \forma x \subseteq \SET( \partial^+ (\forma)) \}$. 
  \item For all $\forma' \in \partial^+ (\forma)$ and $x\in\Sigma$, $\pderiv {\forma'} x \subseteq \SET(\partial^+ (\forma))$.
  \end{enumerate}
\end{lemma}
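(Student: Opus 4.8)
The plan is to prove both parts simultaneously by structural induction on $\forma$, since part~2 generalizes part~1: part~1 is the instance of the inductive statement where $\forma' = \forma$ itself (note $\forma \in \partial^+(\forma)$ in every case of the definition). So I would actually prove the combined statement ``for all $\forma$, and all $\forma' \in \partial^+(\forma)$ and $x \in \Sigma$, $\pderiv{\forma'}{x} \subseteq \SET(\partial^+(\forma))$,'' and then observe part~1 follows because $\forma \in \partial^+(\forma)$, and part~2 is the statement verbatim.

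First I would reduce to the case where $\forma'$ is a \emph{temporal} formula. The partial derivative is defined by structural recursion on formal conjunctions: $\pderiv{\otop}{x} = \{\otop\}$ and $\pderiv{\forma_1 \owedge \forma_2}{x}$ is built by $\owedge$-combining $\pderiv{\forma_1}{x}$ and $\pderiv{\forma_2}{x}$. Since $\SET(\partial^+(\forma))$ is by definition closed under forming formal conjunctions of its elements and contains $\otop$, it suffices to show that for every temporal formula $\formb \in \partial^+(\forma)$, every formal conjunction occurring in $\pderiv{\formb}{x}$ lies in $\SET(\partial^+(\forma))$. Here I use that $\pderiv{\formb}{x} = \{ \forma'' \mid \Angle{\monoa, \forma''} \in \LF(\formb),\ x \models \monoa \}$, so it reduces to analyzing the second components of $\LF(\formb)$ for temporal $\formb$.

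Next comes the core: for each temporal formula shape $\formb$ (a literal, $\true$, $\false$, $\nextL\formc$, $\untilL{\formc_1}{\formc_2}$, $\releaseL{\formc_1}{\formc_2}$ — and the derived $\eventuallyL{}$, $\alwaysL{}$ if treated separately), I inspect the definition of $\LF(\formb)$ and check that every second component of a linear factor is a formal conjunction of formulae drawn from $\partial^+(\formb)$, which is $\subseteq \partial^+(\forma)$ because $\partial^+$ is monotone along the subformula-style recursion and $\formb \in \partial^+(\forma)$ implies $\partial^+(\formb) \subseteq \partial^+(\forma)$ (a small separate lemma, proved by induction, that I would state inline). For literals and $\true$ the second components are just $\otop$; for $\false$ the set is empty. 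For $\nextL\formc$, second components range over $\SIMP(\formc)$, and Lemma~\ref{lemma:simp-delta-plus} gives $\SIMP(\formc) \subseteq \SET(\partial^+(\formc)) \subseteq \SET(\partial^+(\formb))$. For $\untilL{\formc_1}{\formc_2}$, the second components are either second components of $\LF(\formc_2)$ (handled by the IH applied with $\forma := \formc_2$, or rather by the reduction above with ambient set $\partial^+(\formc_2)$) or of the form $\forma_1' \owedge \untilL{\formc_1}{\formc_2}$ with $\Angle{\monoa,\forma_1'}\in\LF(\formc_1)$; since $\forma_1'$ is a formal conjunction over $\partial^+(\formc_1)$ and $\untilL{\formc_1}{\formc_2} \in \partial^+(\untilL{\formc_1}{\formc_2})$, the whole conjunction is a formal conjunction over $\partial^+(\formb)$. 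The $\releaseL{}{}$ case is analogous, using smart conjunction $\scm$ on monomials (which does not affect second components) and the fact that $\forma_1' \owedge \formc_2'$ is a formal conjunction of elements of $\partial^+(\formc_1) \cup \partial^+(\formc_2) \subseteq \partial^+(\formb)$.

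The main obstacle is bookkeeping rather than conceptual depth: making precise that ``$\owedge$ of two formal conjunctions over a set $X$ is again a formal conjunction over $X$'' — i.e.\ that $\SET(X)$ is genuinely closed under $\owedge$ given that $\owedge$ normalizes via associativity, commutativity, and idempotence — and chaining this through the nested conjunctions produced by the $\wedge$-, $\until$-, and $\release$-clauses of $\LF$. I would isolate this as a small auxiliary observation ($x \owedge y \in \SET(X)$ whenever $x, y \in \SET(X)$, using the AC-idempotent normal form) and also the monotonicity lemma $\formb \in \partial^+(\forma) \Rightarrow \partial^+(\formb) \subseteq \partial^+(\forma)$; with those two in hand the case analysis is routine.
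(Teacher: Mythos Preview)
Your reduction of part~1 to part~2 rests on the claim that $\forma \in \partial^+(\forma)$ ``in every case of the definition.'' This is false: for $\forma = \forma_1 \vee \forma_2$ and $\forma = \forma_1 \wedge \forma_2$ we have $\partial^+(\forma) = \partial^+(\forma_1) \cup \partial^+(\forma_2)$, which in general does not contain $\forma$ itself. So part~1 does not fall out of part~2 the way you suggest.

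This same gap resurfaces inside your induction for part~2. In the $\untilL{\formc_1}{\formc_2}$ case with $\forma' = \untilL{\formc_1}{\formc_2}$ you need that the second components of $\LF(\formc_2)$ lie in $\SET(\partial^+(\formc_2))$, and that second components $\forma_1'$ of $\LF(\formc_1)$ lie in $\SET(\partial^+(\formc_1))$. You invoke ``the IH applied with $\forma := \formc_i$,'' but your IH speaks only about $\pderiv{\forma''}{x}$ for $\forma'' \in \partial^+(\formc_i)$, not about second components of $\LF(\formc_i)$. When $\formc_i$ is itself a disjunction or conjunction, $\formc_i \notin \partial^+(\formc_i)$, so the IH gives you nothing directly about $\LF(\formc_i)$.

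The paper sidesteps both problems by structuring the argument the other way round. It first proves, by a separate induction on $\forma$ (over \emph{all} formulae, including $\vee$ and $\wedge$), the statement
\[
  \{\,\forma' \mid \Angle{\monoa,\forma'} \in \LF(\forma)\,\} \subseteq \SET(\partial^+(\forma)),
\]
which is a strengthening of part~1 phrased for linear factors rather than for $\pderiv{\cdot}{x}$. This induction goes through cleanly because $\LF$ has clauses for $\vee$ and $\wedge$. Part~2 is then a short induction on $\forma$ that, in the $\untilL$ and $\releaseL$ cases, simply appeals to this already-established first part. Your auxiliary observations (closure of $\SET(X)$ under $\owedge$, monotonicity of $\partial^+$) are exactly what is used, but they belong in the proof of that first-part statement rather than in an attempt to absorb it into part~2.
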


From Lemmas~\ref{lemma:simp-delta-plus} and~\ref{le:closedness} it follows that
the set of descendants of a fixed LTL formula $\forma$ is finite.
In fact, we can show that the cardinality of this set is exponential
in the size of $\forma$.
We will state this result for
a more ``direct'' definition of partial derivatives
which does not require having to compute linear factors first.

\begin{definition}[Direct Partial Derivatives]
\label{def:direct-part-deriv}  
  Let $x \in \Sigma$. Then, $\partDeriv{\cdot}{x}$ maps LTL formulae to sets of LTL formulae and is defined as follows.
  \bda{lcl}
  \partDeriv\true x & = &  \{\true\} \\
  \partDeriv\false x & = & \{\} \\
  \partDeriv{\Lit}{x} & = & \left \{
                         \ba{ll}
                         \{ \true \} & x \models \Lit \\
                         \{ \}      & \mbox{otherwise}
                         \ea
                         \right.
  \\
  \partDeriv{\forma \vee \formb}{x} & = & \partDeriv{\forma}{x} \cup \partDeriv{\formb}{x}
  \\
  \partDeriv{\forma \wedge \formb}{x} & = & \{ \forma' \wedge \formb' \mid \forma' \in \partDeriv{\forma}{x},
                                                                          \formb' \in \partDeriv{\formb}{x} \}
  \\
  \partDeriv{\nextL{\forma}}{x} & = & \SIMP(\forma)
  \\
  \partDeriv{\untilL{\forma}{\formb}}{x} & = & \partDeriv{\formb}{x} \cup \{ \forma' \wedge \untilL{\forma}{\formb} \mid \forma' \in \partDeriv{\forma}{x} \}
  \\
  \partDeriv{\releaseL{\forma}{\formb}}{x} & = &
  \{ \forma' \wedge \formb' \mid \forma' \in \partDeriv{\forma}{x}, \formb' \in \partDeriv{\formb}{x} \}
  \cup \{ \formb' \wedge \releaseL{\forma}{\formb} \mid \formb' \in \partDeriv{\formb}{x} \}
  \\
  \partDeriv{\eventuallyL\forma}{x} & = & \partDeriv{\forma}{x} \cup \{ \eventuallyL\forma \}
  \\
  \partDeriv{\alwaysL\forma}{x} & = & \{ \forma' \wedge \alwaysL\forma \mid \forma' \in \partDeriv{\forma}{x} \}
  \eda
  where conjunctions of temporal formulae are normalized as usual.

  For $w \in \Sigma^*$, we define $\partDeriv{\forma}{\varepsilon} = \{ \forma \}$
  and $\partDeriv{\forma}{x w} = \bigcup_{\forma' \in \partDeriv{\forma}{x}} \partDeriv{\forma'}{w}$.
  For $L \subseteq \Sigma*$, we define $\partDeriv{\forma}{L} = \bigcup_{w \in L} \partDeriv{\forma}{w}$.
  We refer to the special case $\partDeriv{\forma}{\Sigma^*}$ as the set of \emph{partial derivative descendants}
  of $\forma$.
\end{definition}

\begin{example}
  Consider the formula $\alwaysL{\eventuallyL{p}}$.
  We calculate
  \bda{lcl}
  \partDeriv{\eventuallyL{p}}{p}
      & = & \{ \true, \eventuallyL{p} \}
  \\
  \partDeriv{\alwaysL\eventuallyL{p}}{p}
      & = & \{ \true \wedge \alwaysL\eventuallyL{p}, \eventuallyL{p} \wedge \alwaysL\eventuallyL{p} \}
  \\    &  & \mbox{(normalize)}
  \\    & = & \{ \alwaysL\eventuallyL{p}, \eventuallyL{p} \wedge \alwaysL\eventuallyL{p} \}
  \\
  \partDeriv{\eventuallyL{p} \wedge \alwaysL\eventuallyL{p}}{p}
  & = & \{ \true \wedge \true \wedge \alwaysL\eventuallyL{p},
           \eventuallyL{p} \wedge \alwaysL\eventuallyL{p},
          \true \wedge \eventuallyL{p} \wedge \alwaysL\eventuallyL{p},
          \eventuallyL{p} \wedge \eventuallyL{p} \wedge \alwaysL\eventuallyL{p} \}
 \\ & & \mbox{(normalize)}
 \\ & = & \{ \alwaysL\eventuallyL{p}, \eventuallyL{p} \wedge \alwaysL\eventuallyL{p} \}        
  \eda
\end{example}

\begin{lemma}
\label{le:part-deriv-direct}  
  For all $\forma$ and $x \in \Sigma$, $\pderiv{\forma}{x} = \partDeriv{\forma}{x}$.
\end{lemma}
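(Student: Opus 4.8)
The plan is to prove a slightly more general identity that also applies to disjunctions and to arbitrary conjunctions (these occur as subformulae of $\untilL\cdot\cdot$ and $\releaseL\cdot\cdot$, where $\pderiv{}{}$ is not defined directly), namely: for every PNF formula $\forma$ and every $x\in\Sigma$,
\[
  \{\, \forma' \mid \Angle{\monoa,\forma'} \in \LF(\forma),\ x \models \monoa \,\} \;=\; \partDeriv{\forma}{x}.
\]
From this, the stated lemma follows because for a temporal formula $\forma$ the left-hand side is literally $\pderiv{\forma}{x}$, and for a formal conjunction $\formb\owedge\formc$ of temporal formulae a short secondary induction on the conjunction structure --- matching $\pderiv{\formb\owedge\formc}{x}=\{\formb'\owedge\formc' \mid \formb'\in\pderiv\formb x,\ \formc'\in\pderiv\formc x\}$ against the $\wedge$-clause of the identity above, with base cases $\otop$ and temporal $\forma$ --- closes the gap. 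The identity itself is proved by structural induction on $\forma$, matching each clause of $\LF$ against the corresponding clause of Definition~\ref{def:direct-part-deriv}; throughout I use that $\owedge$ and the ``normalized $\wedge$'' of Definition~\ref{def:direct-part-deriv} denote the same ACI-normalized operator.

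The easy cases: for $\Lit$, $\true$, $\false$ (and $\otop=\true$) the identity reduces to observing that $x\models\{\Lit\}$ iff $x\models\Lit$; for $\forma\vee\formb$ both sides become a union of the recursive results; for $\nextL\forma$ both sides equal $\SIMP(\forma)$, since $\LF(\nextL\forma)$ carries the trivial monomial $\true$, which every symbol satisfies. For $\untilL\forma\formb$ one splits $\LF(\untilL\forma\formb)$ into the part from $\LF(\formb)$ and the part from $\LF(\forma)$; by the induction hypothesis the first, filtered by $x\models\monoa$, equals $\partDeriv{\formb}{x}$, and the second equals $\{\forma'\owedge\untilL\forma\formb \mid \forma'\in\partDeriv{\forma}{x}\}$, which together is $\partDeriv{\untilL\forma\formb}{x}$. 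The derived operators $\eventuallyL\formb$ and $\alwaysL\formb$ are handled the same way using the shortcut equations for $\LF(\eventuallyL\cdot)$ and $\LF(\alwaysL\cdot)$ proved earlier.

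The cases $\forma\wedge\formb$ and $\releaseL\forma\formb$ are the only ones requiring a genuine observation, and I expect this to be the (minor) crux: for monomials $\monoa,\monob$ and a symbol $x$, one must show that $\monoa\scm\monob\ne\false$ \emph{and} $x\models\monoa\scm\monob$ hold exactly when $x\models\monoa$ and $x\models\monob$. Indeed, a symbol satisfies neither $\false$ nor a contradictory monomial, so $x\models\monoa$ and $x\models\monob$ force $\monoa\scm\monob=\monoa\cup\monob\ne\false$; conversely $x\models\monoa\scm\monob$ is equivalent to $\Theta(\monoa)\wedge\Theta(\monob)$ by Lemma~\ref{lemma:conjunction-monomials}. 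With this, the paired comprehension of $\LF(\forma\wedge\formb)$ --- and the first summand of $\LF(\releaseL\forma\formb)$ --- filtered by $x$ collapses to $\{\forma'\owedge\formb' \mid \forma'\in\partDeriv{\forma}{x},\ \formb'\in\partDeriv{\formb}{x}\}$ by the induction hypothesis, while the second summand of $\LF(\releaseL\forma\formb)$ collapses to $\{\formb'\owedge\releaseL\forma\formb \mid \formb'\in\partDeriv{\formb}{x}\}$, and both match Definition~\ref{def:direct-part-deriv}. Apart from this observation and the bookkeeping of keeping the generalized statement and the original statement in sync, no deeper idea is needed.
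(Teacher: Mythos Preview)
Your proposal is correct and follows essentially the same route as the paper: structural induction on $\forma$, with the crux being exactly the observation that $x\models\monoa$ and $x\models\monob$ iff $\monoa\scm\monob\ne\false$ and $x\models\monoa\scm\monob$, which the paper uses in the same place (the $\releaseL{}{}$/$\wedge$ clauses). The one difference is presentational: the paper leaves implicit what you make explicit, namely that the induction hypothesis really concerns the filtered $\LF$ set $\{\forma'\mid\Angle{\monoa,\forma'}\in\LF(\forma),\ x\models\monoa\}$ rather than $\pderiv{\forma}{x}$ (since subformulae of $\untilL{}{}$ and $\releaseL{}{}$ need not be temporal), and that a separate small induction handles the formal-conjunction clause of $\pderiv{}{}$; making this explicit is an improvement in rigor, not a different argument.
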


The next result follows from Theorem \ref{theorem:expansion} and Lemma \ref{le:part-deriv-direct}.

\begin{lemma}
For all $\forma$, $\forma \Leftrightarrow \bigvee_{x \in \Sigma, \forma' \in \partDeriv{\forma}{x}} x \wedge \nextL{\forma'}$.
\end{lemma}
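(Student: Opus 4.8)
The plan is to deduce the statement by re-indexing the disjunction of the Expansion Theorem over the finite alphabet $\Sigma$. First I would use Lemma~\ref{le:part-deriv-direct} to replace $\partDeriv{\forma}{x}$ by $\pderiv{\forma}{x}$, so the goal becomes $\forma \Leftrightarrow \bigvee_{x\in\Sigma,\ \forma'\in\pderiv{\forma}{x}} x\wedge\nextL{\forma'}$, where a symbol $x$, used as a formula, is read as the formula that holds at $\sigma$ exactly when $\sigma_0 = x$. I would then rewrite $\pderiv{\forma}{x}$ using the characterization $\pderiv{\forma}{x}=\{\forma'\mid\Angle{\monoa,\forma'}\in\LF(\forma),\ x\models\monoa\}$; for temporal $\forma$ this is the definition, and for arbitrary $\forma$ it is a routine structural induction through the disjunctive and conjunctive clauses (the conjunctive case using that ``$x\models\monoa$ and $x\models\monob$'' is equivalent to ``$x\models\monoa\scm\monob\neq\false$'', which is immediate from the definition of $\scm$).

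With that in hand, reordering the double disjunction (and using idempotence of $\vee$) gives
\[
\bigvee_{x\in\Sigma,\ \forma'\in\pderiv{\forma}{x}} (x\wedge\nextL{\forma'})
\;\Leftrightarrow\;
\bigvee_{\Angle{\monoa,\forma'}\in\LF(\forma)}\ \Bigl(\bigvee_{x\in\Sigma,\ x\models\monoa} x\Bigr)\wedge\nextL{\forma'}.
\]
The key elementary observation is that $\bigvee_{x\in\Sigma,\ x\models\monoa} x \Leftrightarrow \Theta(\monoa)$ for every monomial $\monoa$: semantically $\sigma\models\Theta(\monoa)$ says precisely that every literal of $\monoa$ holds at $\sigma_0$, i.e.\ that $\sigma_0$ --- always a symbol of the finite set $\Sigma$ --- satisfies $\monoa$. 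Substituting this equivalence turns the right-hand side into $\bigvee_{\Angle{\monoa,\forma'}\in\LF(\forma)}(\Theta(\monoa)\wedge\nextL{\forma'})$, which is exactly $\Theta(\LF(\forma))$.

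Finally, Theorem~\ref{theorem:expansion} yields $\Theta(\LF(\forma))\Leftrightarrow\forma$, closing the chain of equivalences. The only step beyond bookkeeping --- and the one I would write out carefully --- is the characterization of $\pderiv{\forma}{x}$ in terms of linear factors for non-temporal $\forma$; the remaining manipulations are all with finite disjunctions over $\Sigma$, together with the already-established semantics of $\Theta$ and the behaviour of $\scm$.
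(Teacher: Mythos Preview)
Your proposal is correct and follows exactly the route the paper indicates: the paper gives no explicit proof but simply states that the lemma ``follows from Theorem~\ref{theorem:expansion} and Lemma~\ref{le:part-deriv-direct}'', and your argument is precisely the unpacking of that claim---replace $\partDeriv{\forma}{x}$ by $\pderiv{\forma}{x}$, rewrite via linear factors, regroup the disjunction over $\Sigma$ into $\Theta(\monoa)$, and invoke Expansion. Your extra care about the non-temporal case (extending the characterisation $\pderiv{\forma}{x}=\{\forma'\mid\Angle{\monoa,\forma'}\in\LF(\forma),\ x\models\monoa\}$ beyond temporal formulae) is a detail the paper glosses over, but it is indeed routine and does not change the overall shape of the argument.
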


\begin{definition}
  The \emph{size} of a temporal formula $\forma$ is the sum of
  the number of literals, temporal and Boolean operators in $\forma$.  
\end{definition}

If $\forma$ has size $n$, the number of subformulae in $\forma$ is bounded by $O(n)$.

\begin{lemma}
\label{le:part-deriv-cardinality}
  For all $\forma$, the cardinality of
  $\partDeriv{\forma}{\Sigma^*}$ is bounded by $O(2^n)$ where
  $n$ is the size of $\forma$.
\end{lemma}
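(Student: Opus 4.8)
The plan is to bound the cardinality of $\partDeriv{\forma}{\Sigma^*}$ by structural induction on $\forma$, using the key observation from Lemmas~\ref{lemma:simp-delta-plus} and~\ref{le:closedness} that every partial derivative descendant of $\forma$ lies in $\SET(\partial^+(\forma))$, the set of formal conjunctions built from the temporal subformulae of $\forma$. Since $|\partial^+(\forma)| = O(n)$ by the remark following the size definition, a naive count gives $|\SET(\partial^+(\forma))| = 2^{O(n)} = O(2^{cn})$ for some constant $c$, which already yields an exponential bound. So the first and cleanest route is simply: invoke $\partDeriv{\forma}{\Sigma^*} \subseteq \SET(\partial^+(\forma))$ (which follows from Lemma~\ref{le:part-deriv-direct} together with Lemmas~\ref{lemma:simp-delta-plus} and~\ref{le:closedness}, closed under iterated derivation by an easy induction on word length), and then observe $|\SET(X)| = 2^{|X|}$ and $|\partial^+(\forma)| = O(n)$.

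First I would make precise the claim $\partDeriv{\forma}{\Sigma^*} \subseteq \SET(\partial^+(\forma))$. The base case $\partDeriv{\forma}{\varepsilon} = \{\forma\} \subseteq \SET(\partial^+(\forma))$ holds since $\forma \in \partial^+(\forma)$ when $\forma$ is temporal (and for general $\forma$ we first apply $\SIMP$, using Lemma~\ref{lemma:simp-delta-plus}). For the inductive step on word length, suppose $\forma' \in \partDeriv{\forma}{w} \subseteq \SET(\partial^+(\forma))$; then $\forma'$ is a formal conjunction $\forma'_1 \owedge \dots \owedge \forma'_k$ with each $\forma'_j \in \partial^+(\forma)$. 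By the conjunction clause of the direct partial derivative and Lemma~\ref{le:part-deriv-direct}, $\partDeriv{\forma'}{x}$ consists of normalized conjunctions of elements drawn from $\bigcup_j \partDeriv{\forma'_j}{x}$, and by Lemma~\ref{le:closedness}(2) each such set is contained in $\SET(\partial^+(\forma))$; since $\SET(\partial^+(\forma))$ is closed under taking conjunctions followed by normalization (a conjunction of formal conjunctions over an ordered set is again a formal conjunction over that set after deduplication and reordering), we get $\partDeriv{\forma'}{x} \subseteq \SET(\partial^+(\forma))$. Hence $\partDeriv{\forma}{\Sigma^*} = \bigcup_{w} \partDeriv{\forma}{w} \subseteq \SET(\partial^+(\forma))$.

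It then remains to count: $|\SET(\partial^+(\forma))| \le 2^{|\partial^+(\forma)|}$, because every element of $\SET(X)$ is determined by the subset of $X$ whose elements it conjoins (the empty subset giving $\otop$), so $\SET(X)$ injects into $\Power(X)$. Since $|\partial^+(\forma)| = O(n)$ where $n$ is the size of $\forma$ — this is immediate from the definition of $\partial^+$, as it only collects subformulae of $\forma$ and the number of subformulae is linear in $n$ — we conclude $|\partDeriv{\forma}{\Sigma^*}| \le 2^{O(n)} = O(2^n)$ (absorbing the constant into the exponent, or stating it as $O(2^{cn})$ if one prefers to be careful about the implicit constant).

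The main obstacle, such as it is, is being precise about the claim that $\SET(\partial^+(\forma))$ is closed under "conjunction-then-normalization", i.e. that the set-based conjunctive normal form machinery really does keep us inside formal conjunctions over the fixed ordered set $\partial^+(\forma)$; this is where one must use that normalization applies associativity, commutativity, and idempotence with respect to the fixed total order on formulae, so that $\forma'_1 \owedge \dots \owedge \forma'_k \owedge \forma''_1 \owedge \dots$ collapses to a formal conjunction listing each distinct subformula once in order. Everything else is a routine assembly of already-established lemmas, and the exponential bound drops out of the crude subset count — no finer combinatorial analysis is needed for this statement.
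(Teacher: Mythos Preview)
Your proposal is correct and follows essentially the same route as the paper: bound $|\partial^+(\forma)|$ by $O(n)$, use closedness under derivation to place all descendants inside $\SET(\partial^+(\forma))$, and then count subsets. You spell out in more detail than the paper the induction on word length and the closure of $\SET(\partial^+(\forma))$ under normalized conjunction, but the underlying argument is the same.
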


\section{Alternating $\omega$-Automata}
\label{sec:construction}

We revisit the standard construction from LTL formula
to alternating $\omega$-automata as reported by
Vardi~\cite{Vardi:1994:NAA:645868.668514}. 
We observe that the definition of the transition function for formulae in PNF corresponds
to partial derivatives.

The transition function of an alternating automaton yields a set of
sets of states, which we understand as a disjunction of conjunctions
of states. The disjunction models the nondeterministic alternatives
that the automaton can take in a step, whereas the conjunction models
states that need to succeed together. Many presentations use positive
Boolean formulae at this point, our presentation uses the set of
minimal models of such formulae.


\begin{definition}
  A tuple $\A = (Q, \Sigma,  \delta, \pbfa_0, F)$ is
  an \emph{alternating $\omega$-automaton} (AA)
  \cite{DBLP:conf/ifipTCS/LodingT00} if $Q$ is a finite set of states,
  $\Sigma$  
  an alphabet, $\pbfa_0 \subseteq \Power (Q)$ a set of sets of states,
  $\delta : Q \times \Sigma \to \Power (Q)$ 
  a transition function, and $F\subseteq Q$ a set of 
  accepting states. 

  A \emph{run of $\A$ on a word $\sigma$} is a digraph $G= (V,E)$ with nodes $V \subseteq Q \times \nat$ and edges $E \subseteq
  \bigcup_{i\in\nat} V_i \times V_{i+1}$ where $V_i = Q \times \{i\}$, for all $i$.
  \begin{itemize}
  \item $\{ q \in Q \mid (q, 0) \in V\} \in \pbfa_0$.
  \item For all $i\in\nat$:
    \begin{itemize}
    \item If $(q', i+1) \in V_{i+1}$, then $((q,i),(q', i+1)) \in E$,
      for some $q\in Q$.
    \item If $(q,i) \in V_i$, then $\{ q' \in Q \mid ((q,i), (q', i+1)) \in E \} \in \delta (q, \sigma_i)$.
    \end{itemize}
  \end{itemize}

  A run $G$ on $\sigma$ is \emph{accepting} if every infinite path in
  $G$ visits a state in $F$ infinitely often (B{\"u}chi 
  acceptance). Define the language of $\A$ as
$$\L (\A) = \{ \sigma \mid \text{ there exists an accepting run of }\A\text{ on }\sigma \}.$$
\end{definition}

\begin{definition}[\cite{Vardi:1994:NAA:645868.668514,DBLP:conf/lics/MullerSS88}]
\label{def:vardi-aa}  
The alternating $\omega$-automaton $\A (\forma) = (Q, \Sigma, \delta,
\pbfa_0, F)$ resulting from $\forma$ is defined as follows. The set of
states is $Q = \partial^+ (\forma)$, the set of initial states
$\pbfa_0 = \SIMP(\forma)$, the set of accepting states $F = \{ \true
\} \cup \{ \releaseL{\forma}\formb \mid \releaseL{\forma}\formb \in Q
\}$, and the transition function $\delta$ is defined by induction on
the formula argument:
    \begin{itemize}
    \item $\delta (\true, x) = \{ \true \}$
    \item $\delta (\false, x) = \{ \}$
    \item $\delta(\Lit,x) = \{ \true \}$, \mbox{if $x \models \Lit$}
    \item $\delta(\Lit,x) = \{ \}$, \mbox{otherwise}
    \item $\delta(\forma \vee \formb, x) = \delta(\forma,x) \cup \delta(\formb,x) \}$
    \item $\delta(\forma \wedge \formb, x) = \{ q_1 \wedge q_2 \mid q_1 \in \delta(\forma,x), q_2 \in \delta(\formb,x) \}$
    \item $\delta(\nextL\forma, x) = \SIMP (\forma)$
    \item $\delta(\untilL\forma\formb,x) = \delta(\formb,x) \cup \{ q \wedge \untilL\forma\formb \mid q \in \delta(\forma,x) \}$
    \item   $\delta(\releaseL\forma\formb,x)
  = \{ q_1 \wedge q_2 \mid q_1 \in \delta(\forma,x), q_2 \in \delta(\formb,x) \}
  \cup \{ q \wedge \releaseL\forma\formb \mid q \in \delta(\formb,x) \}$
   \end{itemize}
\end{definition}
We deviate slightly from Vardi's original definition by representing disjunction
as a set of states. For example, in his definition $\delta (\false, x)
= \false$, which is equivalent to the empty disjunction. Another
difference is that we only consider formulae in PNF whereas Vardi covers LTL in general.
Hence, Vardi's formulation treats negation by extending the set of
states with negated subformulae. For example, we find
$\delta(\neg\forma,x) = \overline{\delta(\forma,x)}$ 
where $\overline{\FORMA}$ calculates the dual of a set $\FORMA$ of formulae
obtained by application of the de Morgan laws.
The case for negation can be dropped because we assume that formulae are in PNF.
In exchange, we need to state the cases for $\forma\vee\formb$ and for
$\releaseL\forma\formb$ 
%
which can be derived easily from Vardi's formulation by
exploiting standard LTL equivalences.


The accepting states in Vardi's construction are all subformulae of
the form $\neg(\untilL\forma\formb)$, but $\neg(\untilL\forma\formb) =
\releaseL{(\neg\forma)}{(\neg\formb)}$, which matches our
definition and others in the literature \cite{DBLP:journals/fmsd/FinkbeinerS04}.

Furthermore, our construction adds $\true$ to the set of accepting
states, which is not present in Vardi's paper. It turns out that
$\true$ can be eliminated from the accepting states if we set
$\delta (\true, x) = \{\}$. This change transforms
an infinite path with infinitely many $\true$ states into a finite
path that terminates when truth is established. Thus, it does not
affect acceptance of the AA.

The same definition is given by Pel\'{a}nek and Strej\v{c}ek
\cite{DBLP:conf/wia/PelanekS05} who note that the resulting automaton
is restricted to be a 1-weak alternating automaton. For this class of
automata there is a translation back to LTL.

We observe that
the definition of the transition function in Definition~\ref{def:vardi-aa}
corresponds to the direct definition of partial derivatives
in Definition~\ref{def:direct-part-deriv}.

\begin{lemma}
  Let $\A (\forma)$ be the alternating $\omega$-automaton for a formula $\forma$
  according to Definition~\ref{def:vardi-aa}.
  For each $\formb\in Q$ and $x\in\Sigma$, we have that $\delta (\formb, x) = \partDeriv{\formb}{x}$.
\end{lemma}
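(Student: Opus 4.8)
The plan is to prove $\delta(\formb, x) = \partDeriv{\formb}{x}$ by structural induction on the formula $\formb$, exploiting the fact that the two definitions are clause-by-clause almost identical. The claim is stated for $\formb \in Q = \partial^+(\forma)$, but since $\partial^+(\forma)$ is closed under taking subformulae and the recursive calls in both $\delta$ and $\partDeriv{\cdot}{x}$ only ever descend to subformulae, it suffices to prove the equation for \emph{all} PNF formulae $\formb$ by induction on the structure of $\formb$.

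First I would handle the base cases $\formb \in \{\true, \false, \Lit\}$: in each case both definitions give literally the same right-hand side ($\{\true\}$, $\{\}$, and the $x \models \Lit$ case split, respectively), so these are immediate. For the inductive step I would go through the remaining PNF constructors $\forma_1 \vee \forma_2$, $\forma_1 \wedge \forma_2$, $\nextL{\forma_1}$, $\untilL{\forma_1}{\forma_2}$, and $\releaseL{\forma_1}{\forma_2}$ one at a time. In each case the defining clause of $\delta$ and the defining clause of $\partDeriv{\cdot}{x}$ have the same shape; I would rewrite the $\delta$-clause using the induction hypotheses $\delta(\forma_i, x) = \partDeriv{\forma_i}{x}$ for the immediate subformulae $\forma_i$, and observe that the result is syntactically the definition of $\partDeriv{\formb}{x}$. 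For $\nextL{\forma_1}$ there is no recursive call at all: both sides equal $\SIMP(\forma_1)$ directly.

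The only points that need a word of care rather than pure pattern-matching are the conjunctive cases ($\wedge$, $\untilL{}{}$, $\releaseL{}{}$), where the right-hand sides are sets of formal conjunctions $\forma' \owedge \formb'$ that are then normalized ``as usual'' using associativity, commutativity, and idempotence. Here I would note that both definitions apply the \emph{same} normalization convention to their results, so the sets coincide as sets of normal-form conjunctions; the induction hypothesis gives equality of the unnormalized building-block sets, and normalization is a function, so it preserves the equality. I would also remark that the $\delta$-clauses for the derived operators $\eventuallyL{}$ and $\alwaysL{}$ need not be treated separately since we work in PNF where these are abbreviations, but if one wishes one can check them directly against the corresponding $\partDeriv{\cdot}{x}$ clauses in Definition~\ref{def:direct-part-deriv} and they match as well.

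I do not expect a genuine obstacle here; the lemma is essentially an observation that two definitions were written to agree. The mildly delicate bookkeeping is (i) making explicit that the structural recursion stays within $\partial^+(\forma)$ so that quantifying over $\formb \in Q$ is the same as quantifying over all subformulae, and (ii) being precise that ``normalize as usual'' means the identical operation on both sides, so that equality of conjunction \emph{candidates} yields equality of the normalized sets. Neither requires real work, so the proof is short.
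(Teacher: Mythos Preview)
Your proposal is correct. The paper does not give an explicit proof of this lemma at all: it is stated immediately after the remark ``We observe that the definition of the transition function in Definition~\ref{def:vardi-aa} corresponds to the direct definition of partial derivatives in Definition~\ref{def:direct-part-deriv}'', and no proof appears in the appendix. In other words, the paper treats the lemma as self-evident by inspection of the two definitions, which are clause-for-clause identical. Your structural induction is exactly the routine argument that makes this inspection rigorous, and your side remarks about normalization and about extending from $\formb \in Q$ to all PNF formulae are the appropriate bookkeeping.
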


Finally we provide an independent correctness result for the
translation from LTL to AA that relies on the correctness of our
construction of linear factors.

\begin{theorem}
  \label{th:ltl-to-aa}
  Let $\forma$ be an LTL formula.
  Consider the alternating automaton $\A (\forma)$ given by
  \begin{itemize}
  \item $Q = \partial^+ (\forma)$,
  \item $\delta (\formb, x) = \pderiv{\formb}{x}$, for all $\formb\in Q$ and $x\in\Sigma$,
  \item $\pbfa_0 = \SIMP(\forma)$,
  \item $F = \{ \true \} \cup \{ \releaseL{\forma}\formb \mid \releaseL{\forma}\formb \in Q \}$.
  \end{itemize}
  Then, $\L (\forma) = \L (\A (\forma))$ using  the B\"uchi acceptance condition.  
\end{theorem}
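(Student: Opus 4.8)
The plan is to establish $\L(\forma) = \L(\A(\forma))$ by proving both inclusions, using the linear-factor machinery—especially the Expansion Theorem (Theorem~\ref{theorem:expansion})—to handle the step relation, and an induction on the structure of derivative descendants to handle acceptance. The key conceptual bridge is that, by Theorem~\ref{theorem:expansion} together with Lemma~\ref{le:part-deriv-direct}, a state $\formb$ (a formal conjunction of temporal subformulae) satisfies $x\sigma \models \formb$ iff $\sigma \models \bigvee_{\formb' \in \pderiv{\formb}{x}} \formb'$, i.e.\ iff $\sigma \models \formb'$ for some $\formb' \in \delta(\formb,x)$. So a run of $\A(\forma)$ is essentially a proof tree that repeatedly applies the expansion identity, and the only remaining content is the B\"uchi acceptance condition, which must enforce that the $\mathbf{U}$-commitments made along infinite branches are eventually discharged.

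\medskip\noindent\textbf{Soundness ($\L(\A(\forma)) \subseteq \L(\forma)$).} Given an accepting run $G$ on $\sigma$, I would prove by a coinductive/invariant argument that for every level $i$ and every node $(\formb,i) \in V_i$, the suffix $\sigma[i\dots]$ satisfies $\formb$ when restricted to the sub-DAG of $G$ rooted at $(\formb,i)$. The base step uses $\{\,\formb \mid (\formb,0)\in V\,\} \in \pbfa_0 = \SIMP(\forma)$ together with Lemma~\ref{lemma:simp-correct}. The inductive step: the run condition says $\{\formb' \mid ((\formb,i),(\formb',i+1)) \in E\} \in \delta(\formb,\sigma_i) = \pderiv{\formb}{\sigma_i}$, so one application of Expansion (plus Lemma~\ref{le:part-deriv-direct}) reduces satisfaction of $\formb$ at $\sigma[i\dots]$ to satisfaction of the conjunction of successors at $\sigma[i{+}1\dots]$. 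The subtle part is a temporal (non-well-founded) formula such as $\untilL{\forma'}{\formb'}$: here I must use B\"uchi acceptance. By inspection of $\delta$, once a branch carries an $\mathbf{U}$-subformula it keeps carrying it until the $\formb'$-disjunct is chosen, and none of the $\mathbf{U}$-formulae are accepting states; so an accepting infinite path cannot carry the same $\mathbf{U}$-formula forever, which forces the eventuality to be discharged at some finite level—this is exactly what the semantic clause for $\mathbf{U}$ demands. For $\mathbf{R}$-subformulae, which \emph{are} accepting, the complementary argument shows the safety obligation is met at every level. Formally I would set this up as a lemma: for each eventuality appearing at $(\formb,i)$, follow the branch that retains it and invoke the acceptance condition to find the witnessing $n$.

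\medskip\noindent\textbf{Completeness ($\L(\forma) \subseteq \L(\A(\forma))$).} Given $\sigma \models \forma$, I build an accepting run top-down. Level~$0$: pick $\formb_0 \in \SIMP(\forma)$ with $\sigma \models \formb_0$ (Lemma~\ref{lemma:simp-correct}); this is the single node at level~$0$, and $\{\formb_0\}\in\pbfa_0$. Inductively, for each node $(\formb,i)$ with $\sigma[i\dots]\models\formb$, Expansion plus Lemma~\ref{le:part-deriv-direct} give some $\formc \in \pderiv{\formb}{\sigma_i} = \delta(\formb,\sigma_i)$ with $\sigma[i{+}1\dots]\models\formc$; but I must add \emph{all} of $\delta(\formb,\sigma_i)$'s elements?—no: the run condition requires the \emph{set of chosen successors} to lie in $\delta(\formb,\sigma_i)$, and since elements of $\delta$ are themselves formal conjunctions, choosing the single satisfied $\formc$ and declaring its conjuncts the successors works (membership in $\delta$ is of the conjunction $\formc$, read as a set of conjuncts). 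To guarantee the run is accepting I must choose successors \emph{carefully} for eventualities: whenever $\untilL{\forma'}{\formb'}$ occurs as a conjunct of the current state, use the semantic witness $n$ to decide at which future level to take the $\formb'$-branch rather than the $\forma' \wedge \untilL{\forma'}{\formb'}$-branch; this ensures every infinite path eventually leaves every $\mathbf{U}$-state and, since all non-$\mathbf{R}$, non-$\true$ states are $\mathbf{U}$-derived while $\mathbf{R}$-states and $\true$ are accepting, every infinite path hits $F$ infinitely often. (Here the elimination of $\true$ from $F$ remarked after Definition~\ref{def:vardi-aa} is convenient: truncating a branch at $\true$ avoids a spurious infinite $\true$-path.)

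\medskip\noindent\textbf{Main obstacle.} The routine part is the single-step correspondence—it falls straight out of Expansion and Lemma~\ref{le:part-deriv-direct}. The real work is the B\"uchi bookkeeping: proving that ``infinite path hits $F$ infinitely often'' is equivalent to ``all until-eventualities along that path are discharged.'' This requires a clean structural characterization of how each kind of subformula propagates under $\delta$ (an $\mathbf{U}$-conjunct persists along a branch precisely until its right argument is taken; an $\mathbf{R}$-conjunct, being accepting, can persist forever harmlessly), and then matching that to the semantics of $\mathbf{U}$ and $\mathbf{R}$ via the direct characterization of $\sigma \models \releaseL{\forma}{\formb}$ recalled in the lemma after Theorem~\ref{th:standard-ltl}. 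I would isolate this as a key lemma about the shape of $\delta$-descendants and prove it by induction on formula structure, after which both inclusions go through mechanically.
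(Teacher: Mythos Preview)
Your proposal is correct in outline and would go through, but it takes a genuinely different route from the paper.

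The paper argues by structural induction on the \emph{input formula} $\varphi$, treating the construction compositionally: for each connective it assumes $\A(\varphi')$ and $\A(\psi')$ are already correct and shows how accepting runs of the sub-automata combine into (or decompose from) an accepting run of $\A(\varphi)$. In the $\mathbf{U}$ case a secondary induction on the semantic witness $n$ is used; the converse direction is phrased contrapositively ($\sigma\not\models\varphi \Rightarrow \sigma\notin\L(\A(\varphi))$), again by induction on $\varphi$. Notably, the Expansion Theorem is never invoked in that proof---only the raw definition of $\delta$ and the LTL unfolding laws of Theorem~\ref{th:standard-ltl}.

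Your argument instead fixes a single global run and uses Expansion (via Lemma~\ref{le:part-deriv-direct}) as the one-step bridge between $\sigma[i\dots]\models\psi$ and satisfaction of the chosen successor conjunction; the structural induction is then on the \emph{state} $\psi\in Q$ rather than on $\varphi$. This makes the role of linear factors explicit and ties the correctness argument directly to Theorem~\ref{theorem:expansion}, which is arguably more in keeping with the paper's narrative. The cost is that you must isolate the path-shape fact (an infinite path that visits some $\untilL{\varphi'}{\psi'}$ infinitely often must eventually be constantly that state, by the strict-subformula property of $\partial^+$) as a standalone lemma, whereas the paper absorbs the analogous reasoning into the per-connective inductive cases. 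One small wrinkle in your write-up: what you call the ``coinductive/invariant'' step is precisely where the argument would fail to terminate without the B\"uchi condition; the well-founded measure that actually drives the soundness lemma is the size of the state formula, and it is cleaner to frame the lemma that way from the outset rather than as an invariant maintained along levels.
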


\section{Semantic Tableau}
\label{sec:semantic-tableau}

We revisit Wolper's \cite{wolper85-ltl-tableau} semantic tableau method for LTL
to check satisfiability of a formula $\forma$.
A tableau is represented as a directed graph built from nodes where nodes denote sets of formulae.
A tableau starts with the initial node $\{ \forma \}$ and new nodes are generated by decomposition of existing nodes, i.e.~formulae.
Wolper's method requires a post-processing phase where unsatisfiable nodes are eliminated.
The formula $\forma$ is satisfiable if there is a satisfiable path in the tableau.
We observe that decomposition can be explained in terms of linear factors
and some of the elimination (post-processing) steps can be obtained for free.

We largely follow Wolper's notation but start from formulae in PNF.
In the construction of a tableau,
a formula $\forma$ may be \emph{marked}, written as $\markL{\forma}$.
A formula is \emph{elementary} if
it is a literal or its outermost connective is $\nextL{}$.
We write $S$ to denote a set of formulae. Hence, each node is represented by some $S$.
A node is called a \emph{state} if the node consists solely of elementary or marked formulae.
A node is called a \emph{pre-state} if it is the initial node or the
immediate child of a state.

\begin{definition}[Wolper's Tableau Decision Method \cite{wolper85-ltl-tableau}]
  Tableau construction for $\forma$ starts with node $S = \{ \forma \}$.
  New nodes are created as follows.
    \begin{itemize}
    \item Decomposition rules: For each non-elementary unmarked $\forma \in S$
      with decomposition rule $\forma \rightarrow \{
      S_1,\dots,S_k \}$ as defined below, 
      create $k$ child nodes where the $i$th child is of the form $(S - \{ \forma \}) \cup S_i \cup \{\markL{\forma}\}$.

          \bda{crcl}
    \rlabel{D1} & \forma \vee \formb & \rightarrow & \{ \{\forma\}, \{\formb \} \}
    \\
    \rlabel{D2} & \forma \wedge \formb & \rightarrow & \{ \{\forma, \formb \} \}    
    \\
    \rlabel{D3} & \eventuallyL{\forma} & \rightarrow & \{ \{\forma \}, \{\nextL{\eventuallyL{\forma}} \} \}
    \\
    \rlabel{D4} & \alwaysL{\forma} & \rightarrow & \{ \{ \forma, \nextL{\alwaysL{\forma}} \} \}
    \\
    \rlabel{D5} & \untilL{\forma}{\formb} & \rightarrow & \{ \{ \formb \}, \{ \forma, \nextL{(\untilL{\forma}{\formb})} \} \}
    \\
    \rlabel{D6} & \releaseL{\forma}{\formb} & \rightarrow & \{ \{ \formb, \forma \vee \nextL{(\releaseL{\forma}{\formb})} \} \}
    \eda
    \item Step rule: For each node $S$ consisting of only elementary or marked formulae,
          create a child node $\{ \forma \mid \nextL{\forma} \in S
          \}$. Just create an edge if the node already exists.
    \end{itemize}

  Elimination of (unsatisfiable) nodes proceeds as follows.
  A node is eliminated if one of the conditions E1-3 applies.   
    \begin{itemize}
        \item E1: The node contains $p$ and its negation.
        \item E2: All successors have been eliminated.
        \item E3: The node is a pre-state and contains a formula of the form $\eventuallyL{\formb}$
          or $\untilL{\forma}{\formb}$ that is not satisfiable.
    \end{itemize}
  A formula $\eventuallyL{\formb}$, $\untilL{\forma}{\formb}$ is satisfiable in a pre-state,
    if there is a path in the tableau leading from that pre-state to a node containing
    the formula $\formb$.
\end{definition}

\begin{theorem}[Wolper \cite{wolper85-ltl-tableau}]
  An LTL formula $\forma$ is satisfiable iff the initial node generated
  by the tableau decision procedure is not eliminated.
\end{theorem}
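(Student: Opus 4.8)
The plan is to prove both directions of the equivalence by relating one ``round'' of Wolper's construction --- decomposing a pre-state into states, then applying the step rule --- to the linear factors of the conjunction of the formulae in the pre-state, and then to invoke the Expansion Theorem (Theorem~\ref{theorem:expansion}). Concretely, I would first establish a correspondence lemma: for a pre-state node $S$ put $\Theta(S) = \bigwedge S$; exhaustively applying the decomposition rules D1--D6 to $S$ produces a set of state nodes, and each such state $T$, consisting of literals $\Lit_1,\dots,\Lit_k$, formulae $\nextL\psi_1,\dots,\nextL\psi_m$, and marked formulae, corresponds to the linear factor $\Angle{\monoa, \psi_1 \owedge \dots \owedge \psi_m}$ with $\monoa = \{\Lit_1,\dots,\Lit_k\}$, provided $\monoa$ is non-contradictory; the states with a contradictory literal set are exactly the factors $\LF$ discards via the test $\monoa \scm \monob \ne \false$, i.e.\ the nodes E1 removes --- this is the ``free'' elimination. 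The step rule then sends $T$ to the pre-state $\{\psi_1,\dots,\psi_m\}$, whose associated formula is the conjunction occurring in $\pderiv{\Theta(S)}{x}$ for any $x\models\monoa$ (using Lemma~\ref{le:part-deriv-direct}), modulo the $\SIMP$-normalisation of conjunctions, which the tableau merely defers by one decomposition round. Combined with Theorem~\ref{theorem:expansion} this yields the semantic invariant I need: $\sigma \models \Theta(S)$ iff for some generated state $T$ with literal set $\monoa$ and next-part $\psi = \psi_1 \owedge \dots \owedge \psi_m$ we have $\sigma_0 \models \Theta(\monoa)$ and $\sigma[1\dots] \models \psi$, i.e.\ the step-successor pre-state of $T$ is satisfied by $\sigma[1\dots]$.

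For the direction ``$\forma$ satisfiable $\Rightarrow$ the initial node survives'', call a node \emph{$\sigma$-good} if some suffix of $\sigma$ satisfies the conjunction of its formulae. Using the invariant I would show that no $\sigma$-good node is ever eliminated, by induction on the least-fixpoint elimination process: E1 cannot fire on a $\sigma$-good node, since the model makes the literals consistent; E2 cannot fire, since the invariant supplies a $\sigma[1\dots]$-good step successor; and E3 cannot fire, since a model of a present eventuality $\untilL\forma\formb$ (the case $\eventuallyL\formb$ is $\forma=\true$) yields a $\sigma$-following tableau path that, at the least step where $\formb$ holds, takes the $\formb$-branch of D5 and so reaches a node containing $\formb$, while the intermediate nodes keep the left operand $\forma$ and remain $\sigma'$-good. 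Since $\{\forma\}$ is $\sigma$-good whenever $\sigma \models \forma$, it is not eliminated.

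For the converse, if $\{\forma\}$ survives in the final reduced tableau I would extract an infinite path $\{\forma\} = P_0 \to T_0 \to P_1 \to T_1 \to \cdots$ of non-eliminated nodes --- it exists because E2 never fired along it --- that is moreover \emph{fair}: every eventuality occurring in some $P_i$ takes its fulfilling branch at some later $P_j$. Fairness is obtained by a round-robin schedule over the finitely many eventuality subformulae of $\forma$ (finite by Lemma~\ref{le:finiteness}): whenever the scheduled eventuality appears in the current pre-state, the fact that E3 did not fire provides a path within the reduced tableau from that pre-state to a node containing the corresponding $\formb$; follow that witness, then rotate the eventuality to the back of the schedule. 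From a fair path I read off a word $\sigma$ by letting $\INT(\sigma_i)$ agree with the monomial of literals in $T_i$, which is consistent because E1 did not fire. Then a structural induction --- coinductive for $\nextL$ and $\releaseL$ --- shows $\sigma[i\dots] \models \formc$ for every $\formc \in P_i$: literals hold by construction of $\sigma_i$; conjunction and disjunction follow from D2 and D1; $\nextL\formc$ holds because the step rule places $\formc$ in $P_{i+1}$ and $\sigma[i+1\dots]\models\formc$ by the hypothesis; $\untilL\forma\formb$ holds because along the chain of occurrences of that formula the left operand $\forma$ sits in the intermediate states and, by fairness, $\formb$ sits in the state at the discharge step; and $\releaseL\forma\formb$ holds coinductively because D6 keeps $\formb$ in every node of the chain and offers $\forma \vee \nextL(\releaseL\forma\formb)$ at each step. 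In particular $\sigma = \sigma[0\dots] \models \forma$, so $\forma$ is satisfiable.

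The main obstacle is the fair-path extraction in this converse direction and its interaction with the least-fixpoint nature of E2 and E3: one must ensure that steering the path toward a witness for one eventuality never runs into an eliminated node (hence the insistence on paths within the final reduced tableau), and that the round-robin schedule genuinely discharges every eventuality rather than postponing it forever, even as new eventualities keep appearing along the path. Everything else --- the correspondence lemma and the structural induction reading off the model --- is routine given Theorem~\ref{theorem:expansion} and Lemma~\ref{le:part-deriv-direct}; marked formulae and the $\SIMP$-normalisation contribute only bookkeeping.
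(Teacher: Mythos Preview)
The paper does not prove this theorem at all: it is stated as Wolper's result with a citation to \cite{wolper85-ltl-tableau} and no proof is given anywhere in the paper or its appendix. Even for the optimized variant (Theorem~\ref{th:correctness-optimized-variant}), the paper does not supply an independent argument but explicitly defers to Wolper, writing that ``correctness of the optimized Wolper-style tableau construction method follows from Wolper's proof.'' So there is nothing in the paper to compare your proposal against.

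That said, your proposal is a genuine and interesting alternative to simply citing Wolper: you attempt to derive the result from the paper's own machinery (linear factors, Theorem~\ref{theorem:expansion}, Lemma~\ref{le:part-deriv-direct}). Your correspondence lemma is essentially the content of Lemma~\ref{le:state-vs-lf}, which the paper does prove, so that part is on firm ground. The forward direction (satisfiable $\Rightarrow$ not eliminated) is standard and your sketch is sound. For the converse, the fair-path extraction you describe is the classical Hintikka-trace construction; you correctly identify the delicate point --- that steering toward one eventuality's witness must stay inside the reduced tableau and must not indefinitely starve other eventualities --- and the round-robin schedule over the finitely many eventuality subformulae is the right fix. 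One detail to watch: E3 guarantees a path to a node \emph{containing} $\formb$, but you must also argue that the eventuality $\untilL\forma\formb$ itself is carried along that path (via the marked copy or the $\nextL(\untilL\forma\formb)$ branch) so that when you reach the node with $\formb$, the D5-decomposition there actually discharges \emph{this} occurrence rather than an unrelated one. Wolper handles this by his notion of the eventuality being ``satisfiable'' along the path; your sketch implicitly assumes it but should make it explicit.
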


\begin{figure}[tp]
  {\tiny
    \bda{c}
  \xymatrix{ & S_0 = \{ \alwaysL{p} \wedge \eventuallyL{\neg p} \} \ar[d]^2 &&
    \\     & S_1 = \{ \markL{\alwaysL{p} \wedge \eventuallyL{\neg p}}, \alwaysL{p}, \eventuallyL{\neg p} \} \ar[d]^4
    \\     & S_2 = \{ \markL{\alwaysL{p} \wedge \eventuallyL{\neg p}}, \markL{\alwaysL{p}}, \eventuallyL{\neg p}, p, \nextL{\alwaysL{p}} \} \ar[dl]^3 \ar[d]^3
    \\   S_3 = \{ \markL{\alwaysL{p} \wedge \eventuallyL{\neg p}}, \markL{\alwaysL{p}}, \markL{\eventuallyL{\neg p}}, p, \nextL{\alwaysL{p}}, \neg p \} & S_4 = \{ \markL{\alwaysL{p} \wedge \eventuallyL{\neg p}}, \markL{\alwaysL{p}}, \markL{\eventuallyL{\neg p}}, p, \nextL{\alwaysL{p}}, \nextL{\eventuallyL{\neg p}} \} \ar[d]
    \\      & S_5 = \{ \alwaysL{p}, \eventuallyL{\neg p} \} \ar[d]^4
    \\      & S_6 = \{ \markL{\alwaysL{p}}, \eventuallyL{\neg p}, p, \nextL{\alwaysL{p}} \} \ar[dl]^3 \ar[d]^3
    \\  S_7 = \{ \markL{\alwaysL{p}}, \markL{\eventuallyL{\neg p}}, p, \nextL{\alwaysL{p}}, \neg p \} & S_8 = \{ \markL{\alwaysL{p}}, \markL{\eventuallyL{\neg p}}, p, \nextL{\alwaysL{p}}, \nextL{\eventuallyL{\neg p}} \} \ar@/_8pc/[uu] }
  \eda
  }
  \caption{Tableau before elimination: $\alwaysL{p} \wedge \eventuallyL{\neg p}$}
  \label{fig:tableau}
\end{figure}

\begin{example}
  Consider $\alwaysL{p} \wedge \eventuallyL{\neg p}$.
  Figure~\ref{fig:tableau} shows the tableau generated before elimination.
  In case of decomposition, edges are annotated with the number of the
  respective decomposition rule. 
  For example, from the initial node $S_0$ we reach node $S_1$ by decomposition via \rlabel{D2}.
  Node $S_4$ consists of only elementary and marked nodes and therefore we apply the step rule
  to reach node $S_5$.
  The same applies to node $S_3$. For brevity, we ignore its child node because this node
  is obviously unsatisfiable (E1). The same applies to node $S_7$.

  We consider elimination of nodes. Nodes $S_3$, $S_4$, $S_7$ and $S_8$ are states.
  Therefore, $S_0$ and $S_5$ are pre-states.
    Nodes $S_3$ and $S_7$ can be immediately eliminated due to E1.
  Node $S_5$ contains $\eventuallyL{\neg p}$. This formula is not satisfiable because there is not path from $S_5$
  along which we reach a node which contains $\neg p$. Hence, we eliminate $S_5$ due to E3.
  All other nodes are eliminated due to E3.
  Hence, we conclude that the formula $\alwaysL{p} \wedge \eventuallyL{\neg p}$ is unsatisfiable.
\end{example}  

We argue that marked formulae and intermediate nodes are not essential in Wolper's tableau construction.
Marked formulae can simply be dropped and intermediate nodes can be removed
by exhaustive application of decomposition.
This optimization reduces the size of the tableau
and allows us to establish a direct connection between states/pre-states and linear factors/partial derivatives.

\begin{definition}[Decomposition and Elimination via Rewriting]
 \label{def:decomp-elim}
We define a rewrite relation among sets of sets of nodes
ranged over by $N$.
\bda{c}
\myirule{\text{``}\forma \rightarrow \{S_1,\dots,S_n\}\text{''} \in \{D1, \dots, D6\}}
        {\{ S \cup \{\forma\} \} \cup N \rightarrowtail
          \{ S \cup S_1 \} \cup \dots \cup \{ S \cup S_n \} \cup N}
\\
\\
\myirule{ 
         N' = \{ S \mid S \in N \wedge (\forall \Lit\in S)~ \neg\Lit \notin S \}}
        {N \rightarrowtail N'}
\eda
where the premise of the first rule corresponds to one of the
decomposition rules D1-D6. The second rule corresponds to the
elimination rule E1 applied globally.
We write $N_1 \rightarrowtail^* N_k$ for $N_1 \rightarrowtail \dots \rightarrowtail N_k$
where no further rewritings are possible on $N_k$.
We write $\forma \rightarrowtail^* N$ as a shorthand for $\{ \{ \forma \} \} \rightarrowtail^* N$.
\end{definition}

As the construction does not mark formulae, we call $S$ a state
node if $S$ only consists of elementary formulae.
By construction, for any set of formulae $S$ we find that $\{S\} \rightarrowtail^* N$
for some $N$ which only consists of state nodes.
In our optimized Wolper-style tableau construction,
each $S' \in N$ is a `direct' child of $S$ where intermediate nodes are skipped.
We also integrate the elimination condition E1 into the construction of new nodes.

The step rule is pretty much the same as in Wolper's formulation.
The (mostly notational) difference is that we represent a pre-state node with a single formula.
That is, from state node $S$ we generate the child pre-state node $\{ \bigwedge_{\nextL\formb \in S} \formb \}$
whereas Wolper generates $\{ \formb \mid \nextL\formb \in S \}$.

\begin{definition}[Optimized Tableau Construction Method]
  We consider tableau construction for $\forma$.
  We assume that $Q$ denotes the set of pre-state formulae generated so far
  and $Q_j$ denotes the set of active pre-state nodes in the $j$-th construction step.
  We start with $Q = Q_0 = \{ \forma \}$.
  We consider the $j$-th construction step.
    \begin{description}
    \item[Decomposition:]
      For each node $\formb \in Q_j$ we build $\{ \{ \formb \} \} \rightarrowtail^* \{ S_1, \dots, S_n \}$
      where each state node $S_i$ is a child of pre-state node $\{ \formb \}$.
    \item[Step:] For each state node $S_i$, we build $\forma_i = \bigwedge_{\nextL\forma \in S_i} \formb$ where
                 pre-state node $\{ \forma_i \}$ is a child of $S_i$.
                 We set $Q_{j+1} = \{ \forma_1, \dots, \forma_n \} - Q$
                 and then update the set of pre-state formulae generated so far by
                 setting $Q = Q \cup \{ \forma_1, \dots, \forma_n \}$. 
    \end{description}
    Construction continues until no new children are created.
\end{definition}

\begin{theorem}[Correctness Optimized Tableau]
\label{th:correctness-optimized-variant}  
  For all $\forma$, $\forma$ is satisfiable iff the initial node generated
  by the optimized Wolper-style tableau decision procedure is not eliminated
  by conditions E2 and E3.
\end{theorem}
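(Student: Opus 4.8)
The plan is to establish the correctness of the optimized construction by relating it back to Wolper's original tableau, whose correctness is already available as a cited theorem. The key observation is that the optimized construction is obtained from Wolper's by two transformations: (i) eliminating marked formulae and intermediate nodes via exhaustive decomposition, and (ii) folding the global E1 check into the rewrite relation $\rightarrowtail$. So I would first prove a simulation lemma: for every state node $S$ reachable in Wolper's tableau for $\forma$, the pre-state child $\{\bigwedge_{\nextL\formb\in S}\formb\}$ corresponds exactly to a pre-state formula generated by the optimized construction, and conversely; and moreover a pre-state formula $\formb$ in the optimized tableau survives the $\rightarrowtail^*$ decomposition to a non-empty set of state nodes exactly when the corresponding Wolper pre-state is not eliminated by E1/E2. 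This uses Lemma~\ref{lemma:simp-correct} and the decomposition rules D1--D6, which are just the expansion equivalences of Theorem~\ref{th:standard-ltl} read as rewrite steps, so $\bigvee$ over the state nodes produced by $\{\{\formb\}\}\rightarrowtail^*\{S_1,\dots,S_n\}$ is logically equivalent to $\formb$ with the contradictory disjuncts removed.

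Next I would make precise the claim, stated informally just before the theorem, that ``each $S'\in N$ is a direct child of $S$ where intermediate nodes are skipped.'' Concretely: if $\{\{\forma\}\}\rightarrowtail^* N$ with $N$ consisting only of state nodes, then the set $N$ is (up to the marking annotation and the removal of E1-inconsistent nodes) exactly the set of state nodes reachable from $\{\forma\}$ in Wolper's construction using only decomposition rules. This is a confluence/termination argument for $\rightarrowtail$: decomposition strictly reduces a multiset measure on the non-elementary formulae (each D-rule replaces a non-elementary formula by strict subformulae or by $\nextL$-guarded copies that count as elementary), so $\rightarrowtail^*$ terminates, and the result is independent of the order in which D-rules and the E1-rule are applied. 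I would only sketch the measure and confluence, since these are routine.

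With the simulation in place, the core argument is a bijection between the graph of the optimized tableau (nodes = pre-state formulae $Q$, edges via the step rule composed with $\rightarrowtail^*$) and the quotient of Wolper's tableau that collapses each state together with its decomposition ancestors into the single pre-state below it. Under this bijection, E2 (all successors eliminated) and E3 (an eventuality/until in a pre-state with no fulfilling path) translate directly, because ``path from a pre-state to a node containing $\formb$'' is preserved: a fulfilling path in Wolper's tableau projects to one in the optimized tableau and vice versa, since decomposition of $\eventuallyL{\formb}$ or $\untilL{\forma}{\formb}$ always exposes $\formb$ in one of the state-node alternatives. Therefore the initial node $\{\forma\}$ survives E2/E3 in the optimized tableau iff the initial node survives E1/E2/E3 in Wolper's tableau (E1 having been absorbed into construction), which by Wolper's theorem holds iff $\forma$ is satisfiable.

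The main obstacle I expect is the eventuality-fulfillment part of the equivalence, i.e.\ showing that E3 behaves the same in both constructions despite the fact that the optimized tableau skips intermediate nodes and merges marked copies of the same formula. In Wolper's version, a $\markL{}$ annotation prevents a formula from being re-decomposed within the same macro-step, and the ``path to a node containing $\formb$'' is witnessed through those intermediate nodes; after collapsing, one must check that no fulfilling path is lost (a $\formb$ that appeared only transiently in an intermediate node must still appear in some state node or in the expanded pre-state formula) and none is spuriously created. I would handle this by proving that the multiset of temporal subformulae ``owed'' at a pre-state is preserved by $\rightarrowtail^*$ up to logical equivalence, so an until/eventually is fulfillable from a pre-state in one construction iff it is in the other; the finiteness of $\partial^+(\forma)$ (Lemma~\ref{le:finiteness}) guarantees both procedures terminate so the fixpoint reasoning for E2/E3 is well-founded.
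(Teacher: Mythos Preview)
Your proposal is correct and follows essentially the same approach as the paper: both argue by reducing to Wolper's theorem via (i) termination and confluence of $\rightarrowtail$, and (ii) a correspondence between the state nodes of the optimized construction and the state nodes of Wolper's tableau modulo marked formulae and intermediate nodes. The paper's own argument is considerably more terse---it simply asserts that ``Wolper's proof does not require marked formulae nor does he make use of intermediate nodes in any essential way'' and concludes---whereas you take more care with the E3 eventuality-fulfillment preservation, which is the right place to be careful even if the paper treats it as obvious.
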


\begin{example}
 \label{ex:opt1ex}  
  Consider $\alwaysL{p} \wedge \eventuallyL{\neg p}$.
  Our variant of Wolper's tableau construction method yields the following.
  \bda{c}
  \xymatrix{S_0 = \{ \alwaysL{p} \wedge \eventuallyL{\neg p} \} \ar[d]^{decomp}
        \\ S_4' = \{ p, \nextL\alwaysL{p}, \nextL\eventuallyL{\neg p} \} \ar@/_8pc/[u]_{step}}
  \eda
  Node $S_4'$ corresponds to node $S_4$ in Figure~\ref{fig:tableau}.
  Nodes $S_1$, $S_2$, and $S_3$ from the original construction do not
  arise in our variant because we skip intermediate nodes 
  and eliminate aggressively during construction whereas Wolper's
  construction method gives rise $S_5$. 
  We avoid such intermediate nodes and immediately link $S_4'$ to the initial node $S_0$. 
\end{example}

Next, we show that states in the optimized representation of Wolper's tableau
method correspond to linear factors and pre-states correspond to partial derivatives.

Let $S = \{ \Lit_1,\dots,\Lit_n, \nextL{\forma_1},\dots,\nextL{\forma_m} \}$
be a (state) node.
We define $\sem{S} = \Angle{\Lit_1 \scm \dots \scm \Lit_n, \forma_1 \owedge \dots \owedge \forma_m}$
where for cases $n=0$ and $m=0$ we assume $\true$.
Let $N = \{ S_1, \dots, S_n \}$ where each $S_i$ is a state.
We define $\sem{N} = \{\sem{S_1}, \dots, \sem{S_n} \}$.

\begin{lemma}
\label{le:state-vs-lf}
Let $\forma \not= \false$, $N$ such that $\forma \rightarrowtail^* N$.
Then, we find that $\LF (\forma) = \sem{N}$.
\end{lemma}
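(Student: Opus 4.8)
The plan is to prove Lemma~\ref{le:state-vs-lf} by induction on the structure of $\forma$, mirroring the defining clauses of $\LF$ and the rewrite relation $\rightarrowtail$. The key observation is that the decomposition rules D1--D6 are exactly the syntactic counterparts of the recursive cases in the definition of linear factors, and that the global E1-style elimination step corresponds precisely to the $\monoa \scm \monob \ne \false$ side conditions that prune contradictory monomials in the $\LF$ clauses for $\wedge$ and $\releaseL{}{}$. So for each connective I would trace one round of $\rightarrowtail$-rewriting on $\{\{\forma\}\}$ down to a set of state nodes $N$, apply the semantic bracket $\sem{\cdot}$, and check it agrees with $\LF(\forma)$ computed from the subcomponents via the induction hypothesis.

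First I would set up the base cases: for a literal $\Lit$, the only node is already a state node $\{\Lit\}$ (no decomposition rule applies, and E1 does not fire on a single non-contradictory literal), so $\sem{\{\{\Lit\}\}} = \{\Angle{\{\Lit\},\otop}\} = \LF(\Lit)$; similarly for $\true$ (empty literal set, giving $\Angle{\true,\otop}$) and for $\nextL{\formb}$ (already elementary, so $\sem{\{\{\nextL\formb\}\}} = \{\Angle{\true,\formb}\}$ --- here I need that $\SIMP$ is implicitly folded in, matching the $\LF(\nextL\forma)$ clause that ranges over $\SIMP(\forma)$; this requires a small lemma or remark that the optimized construction normalizes $\nextL{(\forma\owedge\formb)}$-style conjunctions, i.e. that step nodes produce formal conjunctions modulo $\SIMP$). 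For $\false$ we are excluded by hypothesis. For the inductive cases: D1 gives $\{\{\forma\},\{\formb\}\}$, and after fully rewriting each branch to state nodes we get the union, matching $\LF(\forma\vee\formb) = \LF(\forma)\cup\LF(\formb)$; D2 merges $\forma$ and $\formb$ into one node, and rewriting to state nodes produces all pairwise combinations of a $\forma$-state with a $\formb$-state sharing a common node --- here $\sem{\cdot}$ takes the union $\scm$ of their literal sets and the conjunction of their next-formulae, and the E1 rule deletes exactly those combinations whose merged literal set is contradictory, matching the $\LF(\forma\wedge\formb)$ clause with its $\monoa\scm\monob\ne\false$ guard and Lemma~\ref{lemma:conjunction-monomials}. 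The $\eventuallyL{}$, $\alwaysL{}$, $\untilL{}{}$, $\releaseL{}{}$ cases are handled analogously using D3--D6 together with the shortcut lemmas for $\LF(\eventuallyL\formb)$ and $\LF(\alwaysL\formb)$, and Theorem~\ref{th:standard-ltl} to see that one decomposition step of $\releaseL\forma\formb$ into $\{\formb, \forma\vee\nextL{(\releaseL\forma\formb)}\}$ unfolds, after further D1/D2 steps, into exactly the two families of linear factors in the $\LF(\releaseL\forma\formb)$ clause.

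The main obstacle I expect is twofold. First, the bookkeeping around \emph{confluence and termination} of $\rightarrowtail$: the lemma statement quantifies over any $N$ with $\forma \rightarrowtail^* N$ (normal form), so I need that $\rightarrowtail^*$ is confluent on these configurations --- or at least that every maximal rewrite sequence from $\{\{\forma\}\}$ yields the same set-of-state-nodes up to the conjunction-normalization convention --- otherwise $\sem{N}$ is not well-defined. I would argue this from the fact that decomposition strictly reduces a multiset-based termination measure (each rule replaces a non-elementary formula by strictly smaller formulae, the $\nextL{}$-guarded recursive copies being elementary and hence inert), and that the choice of which non-elementary formula to decompose is irrelevant because the rules act on disjoint parts of the node. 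Second, the \emph{interaction of the E1-elimination rule with interleaving}: E1 can fire at any intermediate stage, not just at the end, so I must check that eliminating a contradictory node early versus late gives the same final $N$ --- this is fine because once a node contains $\Lit$ and $\neg\Lit$, every descendant does too (decomposition only adds formulae or splits), so early removal loses nothing. Modulo these two structural points, the per-connective verification is a routine unfolding of definitions, and I would present it as a case analysis with the $\wedge$ and $\releaseL{}{}$ cases written out and the rest noted as similar.
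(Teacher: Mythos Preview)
Your approach is essentially the same as the paper's: structural induction on $\forma$, relying on termination and confluence of $\rightarrowtail$ (the paper's Lemma~\ref{le:rewrite-cnf}) and on the observation that one may exhaustively decompose a single formula before touching the rest of the node (which the paper packages as a separate auxiliary Lemma~\ref{le:exhaust-decompose-single}, whereas you fold it into your confluence discussion). Your flagged concern about $\SIMP$ in the $\nextL\formb$ base case is a genuine wrinkle that the paper's proof also glosses over, so you are not missing anything the paper supplies.
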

Case $\false$ is excluded due to $LF (\false) = \{ \}$.

So, any state node generated during the optimized Wolper tableau construction
corresponds to an element of a linear factor.
An immediate consequence is that each pre-state corresponds to a partial derivative.
Hence, we can reformulate the optimized Wolper tableau construction as follows.

\begin{theorem}[Tableau Construction via Linear Factors]
 \label{th:tableau-lf}  
  The optimized variant of Wolper's tableau construction for $\forma$ can be
  obtained as follows.
  \begin{enumerate}\item 
    Each formula $\formb\not= \true$ in the set of all partial
    derivative descendants $\partDeriv{\forma}{\Sigma^*}$ corresponds
    to a pre-state.
  \item For each $\formb \in
    \partDeriv{\forma}{\Sigma^*}$ where $\formb \not= \true$, each
    $\Angle{\monob,\formb'} \in \LF (\formb)$ is state where
    $\Angle{\monob,\formb'}$ is a child of $\formb$, and if $\formb' \not=
    \true$, $\formb'$ is a child of $\Angle{\monob,\formb'}$.

  \end{enumerate}\end{theorem}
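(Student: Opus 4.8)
The goal is to re-present the optimized Wolper tableau construction of Theorem~\ref{th:correctness-optimized-variant} purely in terms of the data $\partDeriv{\forma}{\Sigma^*}$ and $\LF(\cdot)$. The plan is to show that the two descriptions generate literally the same directed graph (up to the bookkeeping identifications $\Angle{\monob,\formb'} \sim S_i$ and $\formb \sim \{\formb\}$), by matching them level by level along the breadth-first construction. Everything needed is already in place: Lemma~\ref{le:state-vs-lf} identifies the one-step decomposition $\formb \rightarrowtail^* N$ with $\LF(\formb)$ via $\sem{\cdot}$, Lemma~\ref{le:part-deriv-direct} identifies $\pderiv{\formb}{x}$ with $\partDeriv{\formb}{x}$, and the direct-derivative machinery of Definition~\ref{def:direct-part-deriv} already packages the iterated step-rule closure as $\partDeriv{\forma}{\Sigma^*}$.

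First I would fix the correspondence on a single pre-state formula $\formb \ne \true$. By Lemma~\ref{le:state-vs-lf}, the decomposition phase applied to $\{\{\formb\}\}$ terminates in a set $N$ of state nodes with $\sem{N} = \LF(\formb)$; since $\sem{\cdot}$ is a bijection between state nodes $S = \{\Lit_1,\dots,\Lit_n,\nextL\forma_1,\dots,\nextL\forma_m\}$ and pairs $\Angle{\monob,\formb'}$ (the monomial recovers the literal multiset, the formal conjunction recovers the $\nextL$-guarded part, and the construction never produces two distinct state nodes with the same image because the $\rightarrowtail$ normalisation is confluent on the relevant data), each state node of $\formb$ is named by exactly one $\Angle{\monob,\formb'}\in\LF(\formb)$ — item~2 of the theorem. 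Next, for the step rule: from a state node $S = \Angle{\monob,\formb'}$ the optimized construction creates the child pre-state $\{\bigwedge_{\nextL\forma\in S}\forma\}$, which by definition of $\sem{\cdot}$ is exactly $\{\formb'\}$; so the child pre-state of $\Angle{\monob,\formb'}$ is $\formb'$, and it is omitted precisely when $\formb' = \true$ (the $m=0$ case), matching the ``if $\formb' \ne \true$'' clause. Combining the two phases, the set of pre-state formulae reachable from $\formb$ in one decompose-then-step round is $\{\formb' \mid \Angle{\monob,\formb'}\in\LF(\formb)\}$, which by the definition of $\partDeriv{\formb}{x}$ (quantified over $x\in\Sigma$, using $\symx\models\monob$) together with Lemma~\ref{le:part-deriv-direct} is exactly $\bigcup_{x\in\Sigma}\pderiv{\formb}{x} = \bigcup_{x\in\Sigma}\partDeriv{\formb}{x}$.

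Then I would close the induction over construction steps. Writing $P_0 = \{\forma\}$ and $P_{k+1} = P_k \cup \bigcup_{\formb\in P_k,\,\formb\ne\true}\{\formb' \mid \Angle{\monob,\formb'}\in\LF(\formb)\}$, the previous paragraph shows $P_{k+1}$ is exactly the set of pre-state formulae present after the $(k{+}1)$-st round of the optimized construction (the guard $\formb\ne\true$ is legitimate since $\true$ is never a pre-state — a node that decomposes to nothing contributes no step child, consistent with $\LF(\true) = \{\Angle{\true,\true}\}$ yielding only the trivial child which is dropped), and by the fixpoint characterisation of $\partDeriv{\forma}{\Sigma^*}$ in Definition~\ref{def:direct-part-deriv} we get $\bigcup_k P_k = \partDeriv{\forma}{\Sigma^*}$. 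Removing $\true$ gives item~1. Since edges were matched in both directions above, the graph underlying the optimized tableau is isomorphic to the graph described by items~1--2, and Theorem~\ref{th:correctness-optimized-variant} transfers verbatim.

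\textbf{Main obstacle.} The routine matching of phases is easy; the delicate point is the bijectivity and well-definedness of $\sem{\cdot}$ on the nodes actually produced by $\rightarrowtail^*$ — in particular, that distinct state nodes never collapse under $\sem{\cdot}$ and that the normalisation built into $\rightarrowtail$ (idempotence/commutativity of $\owedge$, smart conjunction of literals) makes $\sem{N}$ coincide with $\LF(\formb)$ as \emph{sets} rather than merely having equal $\Theta$-images. This is exactly the content of Lemma~\ref{le:state-vs-lf}, so once that lemma is invoked the remaining argument is bookkeeping; the one thing to handle with care is the boundary cases $\formb = \true$ (no pre-state, no child) and the empty monomial/empty conjunction conventions, so that items~1 and~2 read off cleanly without off-by-one artifacts.
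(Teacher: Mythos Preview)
Your proposal is correct and follows the same line as the paper. The paper does not give a standalone proof of this theorem; it simply states, immediately before the theorem, that ``any state node generated during the optimized Wolper tableau construction corresponds to an element of a linear factor'' (Lemma~\ref{le:state-vs-lf}) and that ``an immediate consequence is that each pre-state corresponds to a partial derivative,'' and then records the theorem as a reformulation. Your write-up is exactly this argument spelled out in detail: invoke Lemma~\ref{le:state-vs-lf} for the decomposition phase, read off the step rule as projecting to the second component, identify the one-round successors with $\bigcup_{x\in\Sigma}\partDeriv{\formb}{x}$ via Lemma~\ref{le:part-deriv-direct}, and close under iteration to reach $\partDeriv{\forma}{\Sigma^*}$.
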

We exclude $\true$ because Wolper's tableau construction stops once we reach $\true$.


\begin{example}
  Consider $\neg p \wedge \nextL{\neg p} \wedge \untilL{q}{p}$ where
  \bda{lcl}
  \LF (\neg p) & = & \{ \Angle{\neg p, \true} \}
  \\
  \LF (\true) & = & \{ \Angle{\true, \true} \}
  \\
  \LF (\nextL{\neg p}) & = & \{ \Angle{\true, \neg p} \}
  \\
  \LF (\untilL{q}{p}) & = & \{ \Angle{p, \true}, \Angle{q, \untilL{q}{p}} \}
  \\
  \LF (\neg p \wedge \untilL{q}{p}) & = & \{ \Angle{\neg p \wedge q, \untilL{q}{p}} \}
  \\
  \LF (\neg p \wedge \nextL{\neg p} \wedge \untilL{q}{p}) & = & \{ \Angle{\neg p \wedge q, \neg p \wedge \untilL{q}{p}} \}
  \eda
  We carry out the tableau construction using linear factors notation
  where we use LF to label pre-state (derivatives) to state (linear factor) relations
  and PD to label state to pre-state relations.
  \bda{c}
  \xymatrix{ & \neg p \wedge \nextL{\neg p} \wedge \untilL{q}{p} \ar[d]^{LF} &&
    \\  & \Angle{\neg p \wedge q, \neg p \wedge \untilL{q}{p}} \ar[d]^{PD}
    \\  & \neg p \wedge \untilL{q}{p}  \ar[d]^{LF}
    \\ \ & \Angle{\neg p \wedge q, \untilL{q}{p}} \ar[d]^{PD}
    \\ & \untilL{q}{p} \ar[dl]^{LF} \ar[d]^{LF} 
    \\ \Angle{p,\true}
    & \Angle{q,\untilL{q}{p}}
    }
  \eda
\end{example}

The reformulation of Wolper's tableau construction in terms of linear factors
and partial derivatives allows us to establish a close connection to the construction
of Vardi's alternating $\omega$-automaton.
Each path in the tableau labeled by LF and PD corresponds to a transition step in the automaton.
The same applies to transitions with one exception.
In Wolper's tableau, the state $\Angle{\Lit, \true}$ is considered final whereas
in Vardi's automaton we find transitions $\delta(\Lit, \true) = \{ \true \}$.
So, from Theorem~\ref{th:ltl-to-aa} and Theorem~\ref{th:tableau-lf}
we can derive the following result.

\begin{corollary}
  Vardi's alternating $\omega$-automaton derived from an LTL formula
  is isomorphic to Wolper's optimized LTL tableau construction
  assuming we ignore transitions $\delta(\Lit, \true) = \{ \true \}$.
\end{corollary}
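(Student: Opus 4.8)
The plan is to exhibit an explicit isomorphism between the two structures and check that it respects the transition relations on both sides, with the single documented exception. First I would set up notation: on the Vardi side we have the automaton $\A(\forma)$ from Theorem~\ref{th:ltl-to-aa} with states $Q = \partial^+(\forma)$, transitions $\delta(\formb,x) = \pderiv{\formb}{x}$, initial states $\SIMP(\forma)$, and accepting states $F = \{\true\} \cup \{\releaseL{\forma}{\formb} \mid \releaseL{\forma}{\formb} \in Q\}$. On the tableau side we have, by Theorem~\ref{th:tableau-lf}, pre-states given by the formulae $\formb \ne \true$ in $\partDeriv{\forma}{\Sigma^*}$, and states given by the linear factors $\Angle{\monob,\formb'} \in \LF(\formb)$. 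By Lemma~\ref{le:part-deriv-direct}, $\partial^+(\forma) \supseteq \partDeriv{\forma}{\Sigma^*}$ up to the descendants actually reachable, and the natural candidate isomorphism sends an automaton state $\formb$ to the tableau pre-state labelled $\formb$ (and conversely), while the intermediate tableau states $\Angle{\monob,\formb'}$ are the explicit witnesses of individual linear factors that, in the automaton, are collapsed into the set $\delta(\formb,x) = \pderiv{\formb}{x}$.

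The key steps, in order: (1) Show the state sets correspond. Pre-states are exactly the reachable partial-derivative descendants $\partDeriv{\forma}{\Sigma^*}\setminus\{\true\}$, and by Lemma~\ref{le:part-deriv-direct} and Lemma~\ref{le:closedness} these sit inside $\partial^+(\forma)$; one argues (as in the discussion preceding Theorem~\ref{th:tableau-lf}) that the reachable fragment of $Q$ is precisely this set, so the identity map on formulae is a bijection on reachable states. (2) Show transitions correspond. Fix a pre-state/automaton-state $\formb$ and symbol $x$. On the automaton side a single step goes $\formb \xrightarrow{x} \formb'$ for each $\formb' \in \delta(\formb,x) = \pderiv{\formb}{x}$. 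On the tableau side, by Lemma~\ref{le:state-vs-lf}, $\formb \rightarrowtail^* N$ with $\sem{N} = \LF(\formb)$, so the two-hop path pre-state $\formb \xrightarrow{\mathrm{LF}} \Angle{\monob,\formb'} \xrightarrow{\mathrm{PD}} \formb'$ exists exactly when $\Angle{\monob,\formb'} \in \LF(\formb)$; and by the definition of $\pderiv{\cdot}{x}$, $\formb' \in \pderiv{\formb}{x}$ iff there is $\Angle{\monob,\formb'} \in \LF(\formb)$ with $x \models \monob$. Thus each automaton transition on letter $x$ is in bijection with a tableau LF-then-PD path whose linear-factor monomial is satisfied by $x$, which is the isomorphism on edges. (3) Handle initial states: $\pbfa_0 = \SIMP(\forma)$ on the automaton side, matched by the decomposition $\forma \rightarrowtail^* N$ issuing from the initial tableau node via Lemma~\ref{le:state-vs-lf} and Lemma~\ref{lemma:simp-delta-plus}. (4) Record the exception: in the tableau, a state $\Angle{\Lit,\true}$ is terminal (Wolper stops at $\true$), whereas the automaton has $\delta(\Lit,x) = \{\true\}$ and then $\delta(\true,x)=\{\true\}$ (or $\{\}$ after the optimization noted in Section~\ref{sec:construction}); modulo deleting these $\delta(\Lit,\true) = \{\true\}$ transitions — equivalently, modulo the $\true$ state and its self-loop — the two graphs coincide, which is exactly what the corollary claims.

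I would then note that acceptance conditions agree under this identification: $F$ consists of $\true$ and the release-subformulae, and these are precisely the tableau nodes that, respectively, are terminal-by-$\true$ and correspond to the ``always/release'' fixpoints whose unfulfilled-eventuality elimination (E3) is the B\"uchi fairness constraint in disguise — but since the corollary only asserts graph isomorphism ``ignoring transitions $\delta(\Lit,\true)=\{\true\}$,'' this is a remark rather than a proof obligation, and I would keep it brief.

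The main obstacle I anticipate is step (2) done carefully: one must be precise that the tableau's \emph{two-edge} paths (LF followed by PD) are the right analogue of the automaton's \emph{single} labelled transitions, and that the linear-factor monomial $\monob$ plays the role of the alphabet letter $x$ — the automaton's transition function is indexed by $x\in\Sigma$ and takes the \emph{union} over all linear factors whose monomial $x$ satisfies, whereas the tableau keeps each linear factor as a separate state. Making the bijection statement crisp (a transition of $\A(\forma)$ is a triple $(\formb, x, \formb')$ with $\formb' \in \pderiv{\formb}{x}$; a tableau path is a pair of edges through a linear-factor state $\Angle{\monob,\formb'}$; the correspondence identifies a tableau path with the \emph{set} of automaton transitions $\{(\formb,x,\formb') \mid x\models\monob\}$) is where the care is needed, but all the ingredients — Lemma~\ref{le:state-vs-lf}, Lemma~\ref{le:part-deriv-direct}, Theorem~\ref{th:ltl-to-aa}, and Theorem~\ref{th:tableau-lf} — are already in hand, so the proof is essentially an assembly argument.
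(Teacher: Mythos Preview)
Your proposal is correct and follows essentially the same approach as the paper: the paper presents this corollary as an immediate consequence of Theorem~\ref{th:ltl-to-aa} and Theorem~\ref{th:tableau-lf}, with the one-paragraph observation preceding the corollary already noting that ``each path in the tableau labeled by LF and PD corresponds to a transition step in the automaton'' together with the $\Angle{\Lit,\true}$ exception. Your write-up simply spells out in detail what the paper leaves as a one-sentence derivation, and your careful handling in step~(2) of the two-edge tableau path versus the single labelled automaton transition (with the monomial $\monob$ playing the role of the set of letters $x$ satisfying it) is exactly the content of the paper's informal remark.
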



\bibliographystyle{plain}
\bibliography{main}

\pagebreak

\section{Proofs}

\subsection{Proof of Theorem \ref{theorem:expansion}}

\begin{proof}
  Show by induction on $\forma$:
  for all $\sigma\in\Sigma^\omega$, $\sigma \models \forma$ iff $\sigma \models \Theta (\LF (\forma))$.

  \textbf{Case }$p$.
  \begin{align*}
    \Theta(\LF (p))
    & = \Theta(\{ \Angle{p, \otop} \})  = p \wedge \nextL\true  \Leftrightarrow p 
  \end{align*}

  \textbf{Case }$\neg p$. Analogous.

  \textbf{Case }$\true$.
  \begin{align*}
    \Theta (\LF (\true))
    &= \Theta (\{ \Angle{\true, \otop} \}) = \true \wedge \nextL\true \Leftrightarrow \true
  \end{align*}

  \textbf{Case }$\false$.
  \begin{align*}
    \Theta(\LF (\false))
    &= \Theta (\{ \}) = \false
  \end{align*}

  \textbf{Case }$\forma \vee \formb$.
  \begin{align*}
    \Theta (\LF (\forma \vee \formb))
    &= \Theta (\LF (\forma) \cup \LF (\formb)) = \Theta (\LF (\forma)) \vee \Theta( \LF (\formb))
  \end{align*}
  Now
  \begin{align*}
    \sigma \models \forma \vee \formb
    & \Leftrightarrow     (\sigma \models \forma) \vee (\sigma \models \formb) \\
    & \text{by IH}\\
    & \Leftrightarrow (\sigma \models \Theta (\LF (\forma))) \vee  (\sigma \models \Theta (\LF (\formb))) \\
    & \Leftrightarrow (\sigma \models \Theta (\LF (\forma)) \vee \Theta (\LF (\formb)))
  \end{align*}

  \textbf{Case }$\forma \wedge \formb$.
  \begin{align*}
    \Theta (\LF (\forma \wedge \formb))
    &= \Theta (\{ \Angle{\monoa\scm\monob,\forma' \owedge \formb'} \mid \Angle{\monoa,\forma'} \in \LF (\forma), \Angle{\monob,\formb'} \in \LF (\formb) \}) \\
    &= \bigvee \{ (\monoa\scm\monob) \wedge \nextL(\forma' \owedge \formb') \mid \Angle{\monoa,\forma'} \in \LF (\forma), \Angle{\monob,\formb'}
      \in \LF (\formb) \} 
  \end{align*}
  Now
  \begin{align*}
    \sigma &\models \forma \wedge \formb \\
    &\Leftrightarrow (\sigma \models \forma) \wedge (\sigma \models \formb)\\
    & \text{by IH} \\
    &\Leftrightarrow (\sigma \models \Theta(\LF(\forma))) \wedge (\sigma \models \Theta(\LF(\formb))) \\
    &\Leftrightarrow
      (\sigma \models \bigvee \{ \monoa \wedge \nextL{\forma'} \mid \Angle{\monoa, \forma'} \in \LF (\forma)\})
      \wedge
        (\sigma \models \bigvee \{ \monob \wedge \nextL{\formb'} \mid \Angle{\monob, \formb'} \in \LF (\formb)\})
    \\
    &\Leftrightarrow
      \sigma \models (\bigvee \{ \monoa \wedge \nextL{\forma'} \mid \Angle{\monoa, \forma'} \in \LF (\forma)\})
      \wedge (\bigvee \{ \monob \wedge \nextL{\formb'} \mid \Angle{\monob, \formb'} \in \LF (\formb)\})
    \\
    &\Leftrightarrow
      \sigma \models (\bigvee \{ \monoa \wedge \nextL{\forma'} \wedge \monob \wedge \nextL{\formb'} \mid
      \Angle{\monoa, \forma'} \in \LF (\forma), \Angle{\monob, \formb'} \in \LF (\formb)\})
      \intertext{
      by Lemma~\ref{lemma:conjunction-monomials} $\monoa \wedge \monob \Leftrightarrow \Theta (\monoa \scm \monob)$
      }
    &\Leftrightarrow
      \sigma \models (\bigvee \{ (\monoa \scm \monob) \wedge \nextL{\forma'} \wedge \nextL{\formb'} \mid
      \Angle{\monoa, \forma'} \in \LF (\forma), \Angle{\monob, \formb'} \in \LF
      (\formb)\})
    \\
    &\Leftrightarrow
      \sigma \models (\bigvee \{ (\monoa \scm \monob) \wedge \nextL{(\forma' \owedge \formb')} \mid
      \Angle{\monoa, \forma'} \in \LF (\forma), \Angle{\monob, \formb'} \in \LF (\formb) \})
  \end{align*}

  \textbf{Case }$\nextL{\forma}$. (using Lemma~\ref{lemma:simp-correct})
  \begin{align*}
    \Theta(\LF(\nextL{\forma}))
    &= \Theta (\{ \Angle{\true,\forma'} \mid \forma' \in \SIMP (\forma) \}) \\
    &=\bigvee \{ \true \wedge \nextL{\forma'} \mid \forma' \in \SIMP (\forma) \}\\
    &= \nextL{(\bigvee \SIMP (\forma))} \\
    &\Leftrightarrow\nextL{\forma}
  \end{align*}

  

  \textbf{Case }$\untilL{\forma}{\formb}$. 
  \begin{align*}
    \Theta (\untilL{\forma}{\formb})
    &= \Theta (\LF (\formb) \cup \{ \Angle{\monoa,\forma' \owedge \untilL{\forma}{\formb}}
      \mid \Angle{\monoa,\forma'} \in \LF (\forma) \}) \\
    &= \Theta (\LF (\formb)) \vee \bigvee \{ {\monoa \wedge \nextL{(\forma' \owedge \untilL{\forma}{\formb})}}
      \mid \Angle{\monoa,\forma'} \in \LF (\forma) \} \\
    &\Leftrightarrow \Theta (\LF(\formb)) \vee \bigvee \{ \monoa \wedge
      \nextL{\forma'} \mid \Angle{\monoa, \forma'} \in \LF(\forma)\} \wedge \nextL{(\untilL{\forma}{\formb})} \\
    &\Leftrightarrow \Theta (\LF(\formb)) \vee (\Theta(\LF(\forma)) \wedge \nextL{(\untilL{\forma}{\formb})}) \\
    & \text{by IH} \\
    &\Leftrightarrow \formb \vee (\forma \wedge \nextL{(\untilL{\forma}{\formb})}) \\
    &\Leftrightarrow \untilL{\forma}{\formb}
  \end{align*}

  \textbf{Case }$\releaseL{\forma}{\formb}$.
  \begin{align*}
    \Theta ( \LF (\releaseL{\forma}{\formb}))
    & = \Theta(
    \ba{l}
      \{  \Angle{\monoa \scm \monob, \forma' \owedge \formb'} \mid
           \Angle{\monoa, \forma'} \in \LF (\forma),
           \Angle{\monob, \formb'} \in \LF (\formb)
        \} \\ \cup \{
          \Angle{\monob, \formb' \owedge \releaseL{\forma}{\formb}} \mid
          \Angle{\monob,\formb'} \in \LF (\formb)\}
      \ea          
      ) \\
      & = \ba{l} \bigvee_{\Angle{\monoa, \forma'} \in \LF (\forma), \Angle{\monob, \formb'} \in \LF (\formb)}
                   (\Theta (\monoa \scm \monob) \wedge \nextL{(\forma' \owedge \formb')})
           \\ \vee  \bigvee_{\Angle{\monob,\formb'} \in \LF (\formb)}
                  (\Theta( \monob) \wedge \nextL{ (\formb' \wedge \releaseL{\forma}{\formb})})
           \ea
           \\
           & \text{by Lemma~\ref{lemma:conjunction-monomials} and the fact that $\nextL{(\forma \wedge \formb)} \Leftrightarrow \nextL{\forma} \wedge \nextL{\formb}$} \\
      &\Leftrightarrow
         \ba{l} \bigvee_{\Angle{\monoa, \forma'} \in \LF (\forma), \Angle{\monob, \formb'} \in \LF (\formb)}
                   (\Theta (\monoa) \wedge \Theta (\monob) \wedge \nextL{\forma'}  \wedge \nextL{\formb'})
           \\ \vee  \bigvee_{\Angle{\monob,\formb'} \in \LF (\formb)}
               (\Theta( \monob) \wedge \nextL{\formb'} \wedge \nextL{(\releaseL{\forma}{\formb})})
           \ea
           \\
           &\text{by repeated application of the following distributivity laws} \\
           & (\forma_1 \wedge \forma_2) \vee (\forma_1 \wedge \forma_3) \Leftrightarrow \forma_1 \wedge (\forma_2 \vee \forma_3) \\
           & (\forma_1 \wedge \forma_2) \vee (\forma_3 \wedge \forma_2) \Leftrightarrow (\forma_1 \vee \forma_3) \wedge \forma_2 \\
           &\Leftrightarrow
           \ba{l} \bigvee_{\Angle{\monob,\formb'} \in \LF (\formb)}
                     (\Theta (\monob) \wedge \nextL{\formb'})
                     \\ \wedge (((\bigvee_{\Angle{\monoa,\forma'} \in \LF (\forma)}
                             (\Theta (\monoa) \wedge \nextL{\forma'}))) \vee \nextL{(\releaseL{\forma}{\formb})})
           \ea
           \\
           &= \Theta(\LF(\formb)) \wedge  (\Theta(\LF(\forma)) \vee \nextL{(\releaseL{\forma}{\formb})}) \\
           &\text{by IH} \\
           & \Leftrightarrow \formb \wedge (\forma \vee \nextL{(\releaseL{\forma}{\formb})}) \\
           & \text{by Theorem~\ref{th:standard-ltl}} \\
           & \Leftrightarrow \releaseL{\forma}{\formb}
  \end{align*}
  \qed
\end{proof}

\subsection{Proof of Lemma \ref{le:finiteness}}

  \begin{proof}
    By straightforward induction on the linear temporal formula.
    \qed
  \end{proof}

\subsection{Proof of Lemma \ref{lemma:simp-delta-plus}}

  \begin{proof}
    By straightforward induction on the linear temporal formula.
    \qed
  \end{proof}

\subsection{Proof of Lemma \ref{le:part-deriv-direct}}

\begin{proof}
  By induction on $\forma$. 

  \textbf{Case} $\releaseL\forma\formb$.
    By definition,
    \bda{lcll}
    \pderiv{\releaseL\forma\formb}{x}
     & = & \{ \forma' \owedge \formb' \mid \Angle{\monoa, \forma'} \in \LF (\forma),
                  \Angle{\monob, \formb'} \in \LF (\formb), x \models \monoa \scm \monob \}  & (1)
                  \\ && \cup \{ \formb' \owedge \releaseL{\forma}{\formb} \mid \Angle{\monob,\formb'} \in \LF (\formb), x \models \monob \} & (2)
    \eda
    Consider (1).
    For $\monoa \scm \monob = \false$, the second components of the respective linear forms can be ignored.
    Hence, by IH we find that
    $\{ \forma' \owedge \formb' \mid \Angle{\monoa, \forma'} \in \LF (\forma),
    \Angle{\monob, \formb'} \in \LF (\formb), x \models \monoa \scm \monob \}
    \subseteq \{ \forma' \wedge \formb' \mid \forma' \in \partDeriv{\forma}{x}, \formb' \in \partDeriv{\formb}{x} \}$.
    The other direction follows as well as $x \models \monoa$ and $x \models \monob$ implies that
    $\monoa \scm \monob \not= \false$.
    Consider (2). By IH we have that 
    $\{ \formb' \owedge \releaseL{\forma}{\formb} \mid \Angle{\monob,\formb'} \in \LF (\formb), x \models \monob \}
    =  \{ \formb' \wedge \releaseL{\forma}{\formb} \mid \formb' \in \partDeriv{\formb}{x} \}$.
    Hence, $\pderiv{\releaseL\forma\formb}{x} = \partDeriv{\releaseL\forma\formb}{x}$.

    The other cases can be proven similarly.
\end{proof}

\subsection{Proof of Lemma \ref{le:part-deriv-cardinality}}

\begin{proof}
  The cardinality of $\partial^+ (\forma)$ is bounded by $O(n)$.
  By Lemma~\ref{le:closedness} (second part) elements in the set of descendants
  are in the set $\SET(\partial^+ (\forma))$.
  The mapping $\SET$ builds all possible (conjunctive) combinations of the underlying set.
  Hence, the cardinality of $\SET(\partial^+ (\forma))$ is bounded by $O(2^n)$ and we are done.
\end{proof}

\subsection{Proof of Lemma \ref{le:closedness}}

\begin{proof}
  \textbf{First part.} By induction on $\forma$ we show that $\{ \forma' \mid \Angle{\monoa,\forma'} \in \LF (\forma) \}
  \subseteq \SET(\partial^+ (\forma))$.

  \textbf{Case }$\true$.
  $\LF( \true) = \{ \Angle{\true, \otop} \}$ and $\otop \in \SET(\partial^+ (\true))$.

  \textbf{Case }$\Lit$. Analogous.

  \textbf{Case }$\false$. Holds vacuously.

  \textbf{Case }$\forma \vee \formb$. Immediate by induction.

  \textbf{Case }$\forma \wedge \formb$. Immediate by induction.

  \textbf{Case }$\nextL\forma$. $\LF (\nextL\forma) = \{ \Angle{\true, \forma'} \mid \forma' \in \SIMP (\forma)\}$ and
  by Lemma~\ref{lemma:simp-delta-plus}, $\SIMP (\forma) \subseteq \SET(\partial^+ (\forma))$.

  \textbf{Case }$\untilL\forma\formb$.
  $\LF (\untilL\forma\formb) = \LF (\formb) \cup \{
  \Angle{\monoa,\forma' \owedge \untilL{\forma}{\formb}} \mid
  \Angle{\monoa,\forma'} \in \LF (\forma) \}$.
  By induction, the second components of $\LF (\formb)$ are in $\SET (\partial^+ (\formb)) \subseteq \SET (\partial^+
  (\untilL\forma\formb))$.
  By induction, the second components $\forma'$ of $\LF (\forma)$ are in $\SET (\partial^+ (\forma))$, so that $\forma'
  \owedge \untilL\forma\formb \in \SET (\partial^+ (\forma) \cup \{ \untilL\forma\formb \}) \subseteq \SET (\partial^+
  (\untilL\forma\formb))$. 

  \textbf{Case }$\releaseL\forma\formb$. $\LF (\releaseL{\forma}{\formb}) =   \{  \Angle{\monoa \scm \monob, \forma' \owedge \formb'} \mid
  \Angle{\monoa, \forma'} \in \LF (\forma),
  \Angle{\monob, \formb'} \in \LF (\formb)
  \} \cup \{
  \Angle{\monob, \formb' \owedge \releaseL{\forma}{\formb}} \mid
  \Angle{\monob,\formb'} \in \LF (\formb)\}$.
  By induction $\forma' \in \SET (\partial^+ (\forma))$ and $\formb' \in \SET (\partial^+ (\formb))$ so that
  $\forma'\owedge\formb' \in \SET (\partial^+ (\forma) \cup \partial^+ (\formb)) \subseteq \SET (\partial^+
  (\releaseL\forma\formb))$. 
  Furthermore, $\formb' \owedge \releaseL\forma\formb \in \SET (\partial^+ (\formb) \cup \{ \releaseL\forma\formb \})
  \subseteq \SET (\partial^+ (\releaseL\forma\formb))$.

  \textbf{Second part.} By induction on $\forma$.

  \textbf{Case }$\Lit$. If $\forma' = \Lit$ or $\forma' = \otop$, then $\otop \in \SET (\partial^+ (\Lit))$.

  \textbf{Case }$\true$. Analogous.

  \textbf{Case }$\false$. Vacuously true.

  \textbf{Case }$\forma\vee\formb$. Immediate by induction.

  \textbf{Case }$\forma\wedge\formb$. Immediate by induction.

  \textbf{Case }$\untilL\forma\formb$. By induction and the first part.

  \textbf{Case }$\releaseL\forma\formb$. By induction and the first part.



  
\end{proof}

\subsection{Proof of Theorem \ref{th:ltl-to-aa}}

\begin{proof}
  Suppose that $\sigma \models \forma$. Show by induction on $\forma$ that $\sigma \in \L (\A
  (\forma))$.

  \textbf{Case }$\true$. Accepted by run $\true, \otop, \dots$ which visits $\otop \in F$ infinitely often.

  \textbf{Case }$\false$. No run.

  \textbf{Case }$p$. As $p\in \INT (\sigma_0)$, $\sigma$ is accepted by run $p, \otop, \otop, \dots$.

  \textbf{Case }$\neg p$. Accepted by run $\neg p, \otop, \otop, \dots$.

  \textbf{Case }$\forma \wedge \formb$.
  By definition $\sigma \models \forma$ and $\sigma \models \formb$.
  By induction, there are accepting runs $\alpha_0, \alpha_1, \dots$ on $\sigma$ in $\A (\forma)$
  and $\beta_0, \beta_1, \dots$ on $\sigma$ in $\A (\formb)$.
  But then $\alpha_0 \owedge \beta_0, \alpha_1 \owedge \beta_1, \dots$ is an accepting run on
  $\sigma$ in $\A (\forma \wedge \formb)$ \textbf{because the state sets of the automata are
    disjoint}. 

  \textbf{Case }$\forma \vee \formb$.
  By definition $\sigma \models \forma$ or $\sigma \models \formb$.
  If we assume that $\sigma \models \forma$, then induction yields an accepting run $\alpha_0,
  \alpha_1, \dots$ on $\sigma$ in $\A (\forma)$.
  As the initial state of $\A (\forma \vee \formb)$ is chosen from $\{\alpha_0, \beta_0 \}$, for some $\beta_0$,
  we have that $\alpha_0, \alpha_1, \dots$ is an accepting run on $\sigma$ in $\A
  (\forma\vee\formb)$.

  \textbf{Case }$\nextL\forma$.
  By definition $\sigma[1\dots] \models \forma$.
  By induction, there is an accepting run $\alpha_0, \alpha_1, \dots$ on $\sigma[1\dots]$ in $\A
  (\forma)$ with $\alpha_0 = \SIMP (\forma)$.
  Thus, there is an accepting run $\nextL\forma, \alpha_0, \alpha_1, \dots$ on $\sigma$ in $\A
  (\nextL\forma)$.

  \textbf{Case }$\untilL\forma\formb$.
  By definition $\exists n \in \nat, \forall j \in \nat, j<n \Rightarrow \sigma[j\dots] \models
  \forma$ and $\sigma[n\dots] \models \formb$. By induction, there is an accepting run on
  $\sigma[n\dots]$ in $\A (\formb)$ and, for all $0 \le j<n$, there are accepting runs on
  $\sigma[j\dots]$ in $\A (\forma)$.

  We proceed by induction on $n$.

  \textbf{Subcase }$n=0$. In this case, there is an accepting run $\beta_0, \beta_1, \dots$ on
  $\sigma[0\dots] = \sigma$ in $\A (\formb)$ so that $\beta_0 = \SIMP (\formb)$. We want to show
  that $\untilL\forma\formb, \beta_1, \dots$ is an accepting run on $\sigma$ in $\A
  (\untilL\forma\formb)$. 
  To see this, observe that $\beta_1 \in \pderiv{\beta_0}{\sigma_0}$
  and that $\pderiv{\untilL\forma\formb}{\sigma_0} = \pderiv{\beta_0}{\sigma_0} \cup \pderiv{\alpha_0}{\sigma_0}
  \owedge \untilL\forma\formb$, where $\alpha_0 = \SIMP (\forma)$, which proves the claim.

  \textbf{Subcase }$n>0$. There must be an accepting run $\alpha_0, \alpha_1, \dots$ on
  $\sigma[0\dots] = \sigma$ in $\A (\forma)$ so that $\alpha_0 = \SIMP (\forma)$. By induction (on
  $n$) there must be an accepting run $\beta_0, \beta_1, \dots$ on $\sigma[1\dots]$ in $\A
  (\untilL\forma\formb)$ where $\beta_0 = \untilL\forma\formb$. We need to show that $\untilL\forma\formb,
  \alpha_1 \owedge \beta_0,  \alpha_2 \owedge
  \beta_1, \dots$ is an accepting run on $\sigma$ in $\A (\untilL\forma\formb)$.
  By the analysis in the base case, the automaton can step from $\untilL\forma\formb$ to
  $\pderiv{\alpha_0}{\sigma_0} \owedge \untilL\forma\formb$.

  \ms{MS: probably it's obvious that should mention that final states $\A (\formb)$ are contained in final states $\A (\untilL{\forma}{\formb})$.}

  \ms{MS: notation, a run is a digraph, we should be fine as we only change the starting vertex of the graph.}
  
  \textbf{Case }$\releaseL\forma\formb$.

  By definition, $\forall n \in \nat,
    ( \sigma[n\dots] \models \formb
    \text{ or }
    \exists j \in \nat,
    ((j<n) \wedge  \sigma[j\dots] \models \forma))
    $.
    By induction, there is either an accepting run on $\sigma[n\dots]$
    in $\A (\formb)$, for each $n\in\nat$, or there exists some
    $j\in\nat$ such that there is an accepting run on $\sigma[j\dots]$
    in $\A (\forma)$ and for all $0\le i\le j$, there is an accepting run
    on $\sigma[i\dots]$ in $\A (\formb)$.

    \ms{MS: notation, how are $E$ and $\pi$ connected?}
    If there is an accepting run $\pi_0^n, E_0^n, \pi_1^n, E_1^n, \dots$ in $\A (\formb)$ on
    $\sigma[n\dots]$ for each $n\in\nat$ where $\pi_0^n \in \SIMP
    (\formb)$ and $\pi_{i+1}^n \in \pderiv{\pi_i^n}{\sigma_{i+n}}$, then there is an accepting
    run in $\A (\releaseL\forma\formb)$:

    $\pderiv{\releaseL\forma\formb}{\sigma_0} =
    \pderiv{\forma\owedge\formb}{\sigma_0} \cup
    \pderiv{\formb}{\sigma_0} \owedge \releaseL\forma\formb$.

    \ms{MS: don't follow the below, this should be the case ``there exists $j$ ...''.}
    Suppose that  there is either an accepting run on $\sigma[n\dots]$
    in $\A (\formb)$, for each $n\in\nat$. In this case, there is an accepting run in $\A (\releaseL\forma\formb)$:
    there is infinite path of accepting states $\releaseL\forma\formb, \dots$ and, as $\formb$ holds at every $n$, every
    infinite path that starts in a state in $\pderiv{\formb}{\sigma_n}$ visits infinitely many accepting states.

    Otherwise, the run visits only finitely many states of the form  $\releaseL\forma\formb$ and then continues
    according to the accepting runs on $\forma$ and $\formb$ starting with
    $\pderiv{\forma\owedge\formb}{\sigma_j}$. Furthermore, any infinite path starting at some $\pderiv{\formb}{\sigma_i}
    \owedge \releaseL\forma\formb$ that goes through $\pderiv{\formb}{\sigma_i}$ visits infinitely many accepting states
    (for $0\le i<j$).

    \textbf{Suppose now} that $\sigma\not\models \forma$ and show that
    $\sigma\notin\L (\A (\forma))$.

    $\sigma\not\models\forma$ is equivalent to
    $\sigma\models\neg\forma$. We prove by induction on
    $\forma$ that $\sigma \notin \L (\A (\varphi))$.

    \textbf{Case }$\true$. The statement $\sigma\not\models\true$ is
    contradictory.

    \textbf{Case }$\false$. The statement $\sigma\not\models\false$
    holds for all $\sigma$ and the automaton $\A (\false)$ has no
    transitions, so $\sigma \notin \L (\A (\false))$.

    \textbf{Case }$p$. The statement $\sigma\not\models p$ is
    equivalent to $\sigma\models \neg p$.  That is, $p \notin
    \INT(\sigma_0)$. As $\LF (p) = \{ \Angle{p, \true} \}$, we find
    that $\pderiv{p}{\sigma_0} = \emptyset$ so that $\A (p)$ has no
    run on $p$.

    \textbf{Case }$\neg p$. Similar.

    \textbf{Case }$\forma\wedge\formb$. If
    $\sigma\not\models\forma\wedge\formb$, then
    $\sigma\not\models\forma$ or $\sigma\not\models\formb$.
    If we assume that $\sigma\not\models\forma$ and appeal to
    induction, then either there is no run of $\A (\forma)$ on $\sigma$: in this case, there is no run
    of $\A (\forma\wedge\formb)$ on $\sigma$, either.
    Alternatively, every run of $\A (\forma)$ on $\sigma$ has a path
    with only finitely many accepting states. This property is
    inherited by $\A (\forma\wedge\formb)$.

    \textbf{Case }$\forma\vee\formb$. If 
    $\sigma\not\models\forma\vee\formb$, then
    $\sigma\not\models\forma$ and $\sigma\not\models\formb$. By
    appeal to induction, every run of $\A (\forma)$ on $\sigma$ as
    well as every run of $\A (\formb)$ on $\sigma$ has a
    path with only finitely many accepting states. Thus, every run of
    $\A (\forma\vee\formb)$ on $\sigma$ will have an infinite path with only
    finitely many accepting states.

    \textbf{Case }$\nextL\forma$. If $\sigma\not\models\nextL\forma$,
    then $\sigma\models\neg\nextL\forma$ which is equivalent to
    $\sigma\models\nextL{\neg\forma}$ and thus
    $\sigma[1\dots]\not\models\forma$. By induction every run of $\A
    (\forma)$ on $\sigma[1\dots]$ has an infinite path with only finitely many
    accepting states, so has every run of $\A (\nextL\forma)$ on
    $\sigma$.

    \textbf{Case }$\untilL\forma\formb$. If $\sigma\not\models
    \untilL\forma\formb$, then it must be that
    $\sigma\models\releaseL{(\neg\forma)} {(\neg\formb)}$.

    By definition, the release formula holds if
    \begin{gather*}
      \forall n\in\nat, (\sigma[n\dots] \not\models\formb \text{ or
      }\exists j\in\nat, (j< n \wedge \sigma[j\dots] \not\models\forma))
    \end{gather*}
    We obtain, by induction, for all $n\in\nat$ that either
    \begin{enumerate}
    \item\label{item:1} every run of $\A (\formb)$ on $\sigma[n\dots]$ has an infinite path
      with only finitely many accepting states or
    \item\label{item:2} $\exists j\in\nat$ with $j<n$ and every run of $\A (\forma)$
      on $\sigma[j\dots]$ has an infinite path with only finitely many
      accepting states.
    \end{enumerate}
    Now we consider a run of $\A (\untilL\forma\formb)$ on $\sigma$. 
    \begin{align*}
      \pderiv{\untilL\forma\formb}{\sigma_0}
      &= \{ \forma' \mid \Angle{\monoa, \forma'} \in \LF (\untilL\forma\formb), \sigma_0 \models \monoa \} \\
      &= \{ \formb' \mid \Angle{\monob, \formb'} \in \LF (\formb) , \sigma_0 \models \monob \} \\
      & \cup \{  \forma' \owedge \untilL\forma\formb \mid \Angle{\monoa, \forma'} \in \LF (\forma) , \sigma_0 \models \monoa \} 
    \end{align*}
    To be accepting, the run cannot always choose the alternative that contains $\untilL\forma\formb$ because that would
    give rise to an  infinite path $(\untilL\forma\formb)^\omega$ which contains no accepting state.

    Thus, any accepting run must choose the alternative containing $\formb'$ a derivative of $\formb$. Suppose this
    choice happens at $\sigma_i$. If the release formula is accepted because case~\ref{item:1} holds always, then a run
    of $\A (\formb)$ starting at $\sigma_i$ has an infinite path with only finitely many accepting states. So this run
    cannot be accepting.

    If the release formula is accepted because eventually case~\ref{item:2} holds, then $i<j$ is not possible for the
    same reason as just discussed. However, starting from $\sigma_j$, we have a state component from $\A (\forma)$ which
    has an infinite path with only finitely many accepting states. So this run cannot be accepting, either.

    \textbf{Case }$\releaseL\forma\formb$. If $\sigma\not\models
    \releaseL\forma\formb$, then $\sigma\models\neg
    (\releaseL\forma\formb)$ which is equivalent to 
    $\sigma\models\untilL{(\neg\forma)} {(\neg\formb)}$.

    By definition, the until formula holds if
    \begin{gather*}
      \exists n \in \nat,
      (\forall j \in \nat, j<n \Rightarrow \sigma[j\dots] \not\models \forma)
      \text{ and } \sigma[n\dots] \not\models \formb
    \end{gather*}
    We obtain, by induction, that there is some $n\in\nat$ such that
    \begin{enumerate}
    \item\label{item:3} for all $j\in\nat$ with $j<n$ every run of $\A (\forma)$ on $\sigma[j\dots]$ has an infinite path with only
      finitely many accepting states and
    \item\label{item:4} every run of $\A (\formb)$ on $\sigma[n\dots]$ has an infinite path with only finitely many accepting states. 
    \end{enumerate}
    Now we assume that there is an accepting run of $\A (\releaseL\forma\formb)$ on $\sigma$. Consider
    \begin{align*}
      \pderiv{\releaseL\forma\formb}{\sigma_0}
      &=     \pderiv{\forma\owedge\formb}{\sigma_0} \cup
        \pderiv{\formb}{\sigma_0} \owedge \releaseL\forma\formb
    \end{align*}
    Suppose that the run always chooses the alternative containing the formula $\releaseL\forma\formb$. However, at
    $\sigma_n$, this formula is paired with a run of $\A (\formb)$ on $\sigma[n\dots]$ which has an infinite path with
    only finitely many accepting states. A contradiction.

    Hence, there must be some $i\in\nat$ such that $\A (\releaseL\forma\formb)$ chooses its next states from
    $\pderiv{\forma\owedge\formb}{\sigma_i}$. If this index $i<n$, then this run cannot be accepting because it contains
    a run of $\A (\forma)$ on $\sigma[i\dots]$, which has an infinite path with only finitely many accepting
    states. Contradiction.

    On the other hand, $i\ge n$ is not possible either because it would contradict case~\ref{item:4}.

    Hence, there cannot be an accepting run.
    \qed
\end{proof}

\subsection{Proof of Theorem~\ref{th:correctness-optimized-variant}}

We observe that exhaustive decomposition yields to the same set of states,
regardless of the order decomposition rules are applied.

\begin{example}
  Consider $\alwaysL{p} \wedge \eventuallyL{\neg p}$.
  Starting with $\{ \{ \alwaysL{p} \wedge \eventuallyL{\neg p} \} \}$
  the following rewrite steps  can be applied. Individual rewrite steps
  are annotated with the decomposition rule (number)
  that has been applied.
  \bda{l}
  \alwaysL{p} \wedge \eventuallyL{\neg p}
   \ \stackrel{2}{\rightarrowtail} \ \{ \{ \alwaysL{p}, \eventuallyL{\neg p} \} \}
   \ \stackrel{4}{\rightarrowtail} \ \{ \{ p, \nextL\alwaysL{p}, \eventuallyL{\neg p} \} \}
   \ \stackrel{3}{\rightarrowtail} \ \{ \{ p, \nextL\alwaysL{p}, \neg p \}, \{ p, \nextL\alwaysL{p}, \nextL\eventuallyL{\neg p} \} \}
  \eda
  In the final set of nodes we effectively find nodes $S_3$ and $S_4$
  from Wolper's tableau construction. Intermediate nodes $S_1$ and $S_2$ arise in some intermediate rewrite steps.
  See Figure~\ref{fig:tableau}.
  The only difference is that marked formulae are dropped.

  An interesting observation is that there is an alternative rewriting.
  \bda{ll}
  \alwaysL{p} \wedge \eventuallyL{\neg p}
  \ \stackrel{2}{\rightarrowtail} \ \{ \{ \alwaysL{p}, \eventuallyL{\neg p} \} \}
  \ \stackrel{3}{\rightarrowtail} \ \{ \{ \alwaysL{p}, \neg p \}, \{ \alwaysL{p}, \nextL\eventuallyL{\neg p} \} \}
  & \stackrel{4}{\rightarrowtail} \ \{ \{ p, \nextL\alwaysL{p}, \neg p  \}, \{ \alwaysL{p}, \nextL\eventuallyL{\neg p} \} \}
  \\ & \stackrel{4}{\rightarrowtail} \ \{ \{ p, \nextL\alwaysL{p}, \neg p \}, \{ p, \nextL\alwaysL{p}, \nextL\eventuallyL{\neg p} \} \}
  \eda
  The set of children nodes reached is the same. 
\end{example}

We formalize the observations made in the above example.
Decomposition yields the same set of nodes regardless if there is a choice in some intermediate step.

\begin{lemma}
 \label{le:rewrite-cnf} 
  The rewrite relation $\rightarrowtail$ is terminating and confluent.
\end{lemma}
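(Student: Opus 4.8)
The plan is to treat the two rewrite rules separately and then combine them. For \emph{termination}, I would assign to each node a well-founded measure and show each rewrite step strictly decreases it. The second rule (the global E1 elimination) only removes nodes from $N$, so it strictly decreases $|N|$ unless it is the identity; to keep things clean I would adopt the convention that this rule only ``fires'' when $N' \ne N$ (as the statement of Definition~\ref{def:decomp-elim} already presumes, since otherwise no further rewriting is possible). For the first rule, the point is that each decomposition rule D1--D6 replaces an unmarked non-elementary formula $\forma$ by strictly smaller formulae together with formulae of the form $\nextL{\formb}$, which are elementary and hence never decomposed again. So I would define, for a set of formulae $S$, the multiset $m(S)$ of the sizes of the non-elementary formulae in $S$, and for a set of sets $N$ the multiset $M(N) = \biguplus_{S \in N} m(S)$, ordered by the Dershowitz--Manna multiset ordering over $\nat$. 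Each application of D1--D6 removes one element from some $m(S)$ and adds finitely many strictly smaller ones (the subformulae $\forma,\formb$ of the redex) plus some elements of size $0$ in the multiset sense because they are $\nextL{}$-formulae which I exclude from $m$; thus $M$ strictly decreases. The elimination rule does not increase $M$. Since both components of the lexicographic pair $(M(N), |N|)$ live in well-founded orders and every step decreases one without increasing the other appropriately, $\rightarrowtail$ terminates.

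For \emph{confluence}, since termination is in hand, by Newman's Lemma it suffices to prove \emph{local confluence}: whenever $N \rightarrowtail N_1$ and $N \rightarrowtail N_2$, there is a common reduct $N_3$ with $N_1 \rightarrowtail^* N_3$ and $N_2 \rightarrowtail^* N_3$. I would do a case analysis on the pair of rules applied. If both steps are decompositions, the two redexes are either in different member sets $S, S' \in N$, or in the same $S$ but distinct formulae $\forma \ne \formb$, or the same redex (trivial). In the disjoint cases the steps commute outright: decomposition of $\forma$ in $S$ only rewrites the copies of $S$, and applying the two decompositions in either order yields the same $N_3$ (here one uses that the decomposition of one formula leaves the other formula of $S$ untouched and simply propagates it into all children). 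If one step is a decomposition and the other is the global elimination, I would argue that elimination commutes with decomposition up to $\rightarrowtail^*$: a node $S$ that gets eliminated because it contains $p, \neg p$ stays ``bad'' after any decomposition of some other formula in it (decomposition never deletes a literal, only adds formulae), so we can do the decomposition first and then eliminate all descendants, or eliminate first; either way we reach the same normal-form set of good state nodes. If both steps are eliminations, the rule is essentially idempotent/deterministic on a fixed $N$ (it keeps exactly the contradiction-free members), so $N_1 = N_2$.

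The main obstacle I anticipate is not termination but the bookkeeping in the decomposition/elimination commutation case: one has to be careful that when a literal $p$ is introduced into a node $S$ by decomposing a formula like $\formb_1 \vee \formb_2$, and $\neg p$ was already present, the resulting node is correctly removed, and that the \emph{order} of decomposition does not cause a spurious node to persist or a needed node to vanish. The cleanest way to handle this is to prove an auxiliary invariant: the set of state nodes reachable from a given $S$ by exhaustive decomposition-then-elimination equals $\{ S' \mid S' \text{ contradiction-free state node}, \ \bigwedge S \Leftrightarrow \bigwedge S' \text{ is witnessed by the D-rules}\}$, i.e.\ characterize normal forms semantically (via the expansion identities underlying D1--D6, Theorem~\ref{th:standard-ltl}) so that they are manifestly independent of the rewrite order. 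With that invariant, local confluence becomes almost immediate, and together with termination we conclude confluence.
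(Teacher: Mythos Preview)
The paper's own proof is a single line (``by inspection of the decomposition rules D1--6''), so you are attempting considerably more than the paper provides; your overall strategy (a well-founded measure for termination, then Newman's Lemma for confluence) is the standard and appropriate one. However, your termination measure has a concrete gap. You assert that each of D1--D6 replaces the redex by strictly smaller formulae together with $\nextL{}$-formulae, but this fails for D6: decomposing $\releaseL{\forma}{\formb}$ yields $\{\formb,\ \forma \vee \nextL{(\releaseL{\forma}{\formb})}\}$, and the second member is a disjunction (hence non-elementary) whose syntactic size is strictly \emph{larger} than that of $\releaseL{\forma}{\formb}$. Raw formula size therefore does not decrease at this step. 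The repair is to use a measure that ignores material guarded by $\nextL{}$ (count only connectives not in the scope of any $\nextL{}$) and to weight $\mathbf{R}$ slightly more than $\vee$; then the new disjunction weighs essentially $1+\mathrm{wt}(\forma)$, which is strictly below $\mathrm{wt}(\releaseL{\forma}{\formb})$.

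A second, smaller issue: your flat union $M(N)=\biguplus_{S\in N} m(S)$ is not decreased by the branching rules D1, D3, D5. Splitting $S$ into several children duplicates every \emph{other} non-elementary formula of $S$ in the global multiset, and such a duplicated element may well be larger than the redex you removed, so the Dershowitz--Manna comparison fails. The standard remedy is a multiset of multisets $\{\,m(S):S\in N\,\}$ under the nested Dershowitz--Manna order; each child's $m$ is strictly smaller than the parent's $m$ (one element removed, finitely many smaller ones added), so replacing one $m(S)$ by finitely many smaller multisets is a genuine decrease at the outer level. With these two adjustments your termination argument goes through; your confluence sketch is fine.
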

\begin{proof}
By inspection of the decomposition rules D1-6.
\end{proof}

So, our reformulation of Wolper's tableau construction
method yields the same nodes (ignoring marked formulae and intermediate nodes).

\begin{lemma}
  Let $S$ be a pre-state node in Wolper's tableau construction
  and $S'$ be a node derived from $S$ via some (possibly repeated) decomposition steps
  where $S'$ is a state.
  Then, $\{ S \} \rightarrowtail^* N$ for some $N$ where $S'' \in N$ such that
  $S''$ and $S'$ are equivalent modulo marked formulae.
\end{lemma}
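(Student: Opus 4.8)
The plan is to simulate each of Wolper's decomposition steps by exactly one application of the first rewrite rule of Definition~\ref{def:decomp-elim}, and then to invoke Lemma~\ref{le:rewrite-cnf} to conclude that the simulated state node survives into the (unique) normal form $N$. The key observation is that a pre-state node $S$ carries no marks (it is either the initial node $\{\forma\}$ or the child $\{\formb \mid \nextL\formb \in S_0\}$ of some state $S_0$), that the decomposition rules D1--D6 never introduce a marked formula inside the sets $S_1, \dots, S_k$ on their right-hand sides, and that a decomposition rule fires only on an \emph{unmarked, non-elementary} formula. Consequently, writing $\widehat{T}$ for the node obtained from $T$ by deleting every marked formula, the marking bookkeeping of Wolper's construction is precisely what $\rightarrowtail$ throws away: if Wolper moves from $T$ to a child $C_i = (T \setminus \{\forma\}) \cup S_i \cup \{\markL{\forma}\}$ using $\forma \to \{S_1, \dots, S_k\}$, then $\widehat{C_i} = (\widehat{T} \setminus \{\forma\}) \cup S_i$, because $S_i$ contains no marks, $\markL{\forma}$ vanishes under $\widehat{\cdot}$, and $\forma$ itself does not occur in $S_i$ for any of D1--D6.

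First I would prove a single-step simulation claim: if $\{S\}$ rewrites, in finitely many $\rightarrowtail$-steps, to a set of nodes $M$ with $\widehat{T} \in M$, and Wolper decomposes an unmarked non-elementary $\forma \in \widehat{T}$ via $\forma \to \{S_1, \dots, S_k\}$, then $M \rightarrowtail M'$ with $\widehat{C_1}, \dots, \widehat{C_k} \in M'$. This is immediate from the first rewrite rule applied to the node $\widehat{T} = U \cup \{\forma\}$, where $U = \widehat{T} \setminus \{\forma\}$: the rule replaces $\widehat{T}$ by $U \cup S_1, \dots, U \cup S_k$, which are exactly $\widehat{C_1}, \dots, \widehat{C_k}$, with any coincidences among them harmlessly collapsing in the set of nodes. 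The global elimination rule (the second rewrite rule) only deletes a node $T'$ with $\Lit, \neg\Lit \in T'$ for some literal $\Lit$; such a $T'$ is the $\widehat{\cdot}$-image of a Wolper node deleted by E1 as well, so interleaving elimination steps is harmless.

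Then I would chain the simulation along the sequence of Wolper decomposition steps $S = T_0, T_1, \dots, T_m = S'$ producing $S'$, where $T_{j+1}$ is one of the children of $T_j$; at step $j$ the decomposed formula is unmarked, hence lies in $\widehat{T_j}$, so the single-step claim applies. Since $S$ is a pre-state, $\widehat{S} = S$, so starting from $\{S\} = \{\widehat{T_0}\}$ and applying the claim $m$ times (interspersing elimination steps as needed, which by Lemma~\ref{le:rewrite-cnf} affects neither termination nor the eventual outcome) yields a finite $\rightarrowtail$-sequence from $\{S\}$ to some $M$ with $\widehat{S'} = \widehat{T_m} \in M$ --- unless $S'$ contains a literal together with its negation, in which case $S'$ is eliminated in Wolper's construction too and there is nothing to prove. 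Because $S'$ is a state, every unmarked formula of $S'$ is elementary, hence $\widehat{S'}$ contains no decomposable formula and, being non-contradictory, is touched neither by a further decomposition step nor by the elimination rule; therefore $\widehat{S'}$ belongs to the normal form $N$ reached from $M$. By Lemma~\ref{le:rewrite-cnf} this $N$ is the unique normal form of $\{S\}$, and setting $S'' := \widehat{S'}$ --- which is $S'$ up to its marked formulae --- finishes the argument.

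The main obstacle is not the algebra of the rewrite rules but the careful alignment of three mismatched representations: Wolper builds a \emph{tree} of nodes whereas $\rightarrowtail$ operates on a \emph{set} of nodes; Wolper \emph{keeps} decomposed formulae as marks whereas $\rightarrowtail$ \emph{deletes} them outright; and E1 is a per-node side condition for Wolper but a single global rewrite step here. Choosing the right correspondence --- $\widehat{\cdot}$, which \emph{deletes} marked formulae rather than \emph{erasing} the mark (erasing would wrongly re-expose an already decomposed formula and break the simulation) --- is the decisive move; with that in place, the invariant ``$\widehat{T_j}$ occurs among the $\rightarrowtail$-reachable nodes'' is maintained by direct inspection of D1--D6, and Lemma~\ref{le:rewrite-cnf} supplies the uniqueness of $N$.
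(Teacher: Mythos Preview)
Your proposal is correct and follows the same idea as the paper's proof: both observe that the rewriting of Definition~\ref{def:decomp-elim} is exactly Wolper's decomposition with the marking bookkeeping discarded, so that state nodes in Wolper's tree correspond (modulo marks) to nodes in the rewrite normal form. The paper dispatches this in two sentences, whereas you spell out the simulation via the $\widehat{\cdot}$ operator, verify the single-step correspondence against D1--D6, and invoke Lemma~\ref{le:rewrite-cnf} explicitly for uniqueness of the normal form; this is more rigorous but not a different route.
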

\begin{proof}
  On a state no further decomposition rules can be applied.
  The only difference between our rewriting based formulation
  of Wolper's tableau construction is that we drop marked formulae.
  Hence, the result follows immediately.
\end{proof}

Wolper's proof does not require marked formulae not does he make
use of intermediate nodes in any essential way.
\ms{Wolper says he requires marked formulae but they don't show up in the proof.}
Hence, we argue that correctness of the optimized Wolper-style
tableau construction method follows from Wolper's proof.

\subsection{Proof of Lemma~\ref{le:state-vs-lf}}

We first state some auxiliary result.

\begin{lemma}
\label{le:exhaust-decompose-single}  
  Let $\{ S \cup \{\forma\} \} \cup N \rightarrowtail \{ S \cup S_1 \} \cup \dots \cup \{ S \cup S_n \} \cup N
  \rightarrowtail^* N'$
  where $\forma \rightarrow \{ S_1, \dots, S_n \}$
  and $\{ \{ \forma \} \} \rightarrowtail^* \{ S_1', \dots, S_m' \}$.
  Then, $\{ S \cup \{\forma\} \} \cup N \rightarrowtail \{ S \cup S'_1 \} \cup \dots \cup \{ S \cup S'_m \} \cup N
         \rightarrowtail^* N'$.
\end{lemma}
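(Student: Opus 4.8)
The plan is to reduce the statement to the confluence and termination of $\rightarrowtail$ established in Lemma~\ref{le:rewrite-cnf}. Confluence plus termination give unique normal forms, reached by every maximal $\rightarrowtail$-sequence; in particular the hypothesis $\{S\cup\{\forma\}\}\cup N \rightarrowtail^* N'$ says exactly that $N'$ is the normal form of $\{S\cup\{\forma\}\}\cup N$. So it suffices to show that $\{S\cup S'_1\}\cup\cdots\cup\{S\cup S'_m\}\cup N$ has the same normal form $N'$ — equivalently, that the two configurations are $\rightarrowtail$-joinable. The first arrow in the stated conclusion is then to be understood in this joinability sense: in general one cannot literally reach $\{S\cup S'_1\}\cup\cdots\cup\{S\cup S'_m\}\cup N$ from $\{S\cup\{\forma\}\}\cup N$, because a non-elementary subformula already present in $S$ and also generated while decomposing $\forma$ may get reduced on the way, but that is immaterial once we pass to normal forms.

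The core step is a replay lemma for the decomposition of $\forma$ inside a context. Each $\rightarrowtail$-step acts either as one of D1--D6 on a single formula occurrence inside a single node — insensitive to the rest of that node and to the sibling nodes — or as the global, monotone E1-deletion. Hence any derivation is stable under adjoining a fixed set $S$ to every node and under adding further nodes $N$: replaying the given derivation $\{\{\forma\}\}\rightarrowtail^*\{S'_1,\dots,S'_m\}$ yields $\{S\cup\{\forma\}\}\cup N\rightarrowtail^* M$, where $M$ agrees with $\{S\cup S'_1\}\cup\cdots\cup\{S\cup S'_m\}\cup N$ except that some nodes may be missing (extra E1-deletions triggered by $S$) or may have had an $S$-occurrence of a non-elementary formula already decomposed; in either case $M$ and $\{S\cup S'_1\}\cup\cdots\cup\{S\cup S'_m\}\cup N$ are joinable, since further decomposition just finishes off the remaining non-elementary formulas. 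I would make this bookkeeping precise through "exhaustive decomposition as a function": let $\mathrm{dec}(T)$ be the (confluence-unique) set of state nodes obtainable from $\{T\}$, prove the associativity identity $\mathrm{dec}(S\cup\{\forma\})=\bigcup_{R\in\mathrm{dec}(\{\forma\})}\mathrm{dec}(S\cup R)$ by induction on a termination measure of $\forma$ (decomposing the outermost connective of $\forma$ first, via confluence), and observe that the normal form of a set of nodes is the E1-filter of the union of the $\mathrm{dec}$'s of its members; the joinability claim then falls out by unfolding both sides.

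Putting the pieces together: the given step $\{S\cup\{\forma\}\}\cup N\rightarrowtail\{S\cup S_1\}\cup\cdots\cup\{S\cup S_n\}\cup N$ reuses the decomposition $\forma\to\{S_1,\dots,S_n\}$, which is likewise the first step of a derivation $\{\{\forma\}\}\rightarrowtail\{S_1\}\cup\cdots\cup\{S_n\}\rightarrowtail^*\{S'_1,\dots,S'_m\}$; the replay lemma lifts the rest to a reduction from $\{S\cup\{\forma\}\}\cup N$ to a configuration joinable with $\{S\cup S'_1\}\cup\cdots\cup\{S\cup S'_m\}\cup N$, and by confluence that configuration — hence also $\{S\cup S'_1\}\cup\cdots\cup\{S\cup S'_m\}\cup N$ — has normal form $N'$, which is the conclusion. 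I expect the one genuine difficulty to be making the replay invariant in the core step robust against (i) the global E1 rule and (ii) subformulas shared between $S$ (or the sibling nodes $N$) and the formulas produced while decomposing $\forma$; this is precisely why I would route the argument through the associativity identity for $\mathrm{dec}$ and the uniqueness of normal forms rather than chasing syntactic equality of intermediate configurations.
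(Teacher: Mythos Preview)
Your proposal is correct and, at the level of the key idea, it is the same as the paper's: both arguments reduce the claim to termination and confluence of $\rightarrowtail$ (Lemma~\ref{le:rewrite-cnf}), hence to uniqueness of normal forms. The paper's entire proof is a single sentence --- ``By induction over the length of the derivation $\{\{\forma\}\}\rightarrowtail^*\{S'_1,\dots,S'_m\}$ and the fact that the rewriting relation is terminating and confluent'' --- so your write-up is far more explicit than what is actually on the page.

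Two remarks on the comparison. First, you organise the bookkeeping through the function $\mathrm{dec}$ and an associativity identity, whereas the paper gestures at an induction on the length of the $\{\{\forma\}\}$-derivation; these are interchangeable packagings of the same confluence argument. Second, you correctly observe that the single $\rightarrowtail$ in the conclusion cannot in general be a literal one-step reduction (because a non-elementary formula in $S$ may coincide with one produced while decomposing $\forma$, and set-union merges them), and should be read as $\rightarrowtail^*$ modulo joinability. The paper does not comment on this; its later use of the lemma (in the proof of Lemma~\ref{le:state-vs-lf}) also writes $\rightarrowtail$ loosely where $\rightarrowtail^*$ is meant, so your reading is the intended one and your extra care here is warranted rather than superfluous.
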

\begin{proof}
  By induction over the length of the derivation $\{ \{ \forma \} \} \rightarrowtail^* \{ S_1', \dots, S_m' \}$
  and the fact that the rewriting relation is terminating and confluent (Lemma~\ref{le:rewrite-cnf}).
\end{proof}

The above says that we obtain the same result, regardless,
if we exhaustively decompose a single formula, or apply 
decomposition steps that alternate among multiple formulae.
This simplifies the up-coming inductive proof of Lemma~\ref{le:rewrite-cnf}.

By induction on $\forma$ we show that if $\forma \rightarrowtail^* N$ then $\LF (\forma) = \sem{N}$.
\begin{proof}
  \textbf{Case }$\forma \wedge \formb$.
     By assumption $\forma \wedge \formb \rightarrowtail \{ \{\forma, \formb\} \} \rightarrowtail^* N$.
     By induction we find that (1) $\LF (\forma) = \sem{N_1}$ and (2) $\LF (\formb) = \sem{N_2}$
     where $\forma \rightarrowtail^* \{ S_1,\dots, S_n \}$, $\formb \rightarrowtail^* \{ T_1,\dots, T_m \}$,
     $N_1 = \{ S_1,\dots, S_n \}$ and $N_2 = \{ T_1,\dots, T_m \}$.
     By Lemma~\ref{le:exhaust-decompose-single}, we can conclude that
     $\forma \wedge \formb \rightarrowtail \{ \{ \formb \} \cup S_1, \dots, \{ \formb \} \cup S_n \}
     \rightarrowtail \{ S \cup T \mid S \in \{S_1,\dots,S_n\}, T \in \{T_1,\dots,T_m \}\}$
     where $N = \{ S \cup T \mid S \in \{S_1,\dots,S_n\}, T \in \{T_1,\dots,T_m \}\}$.
     From this and via (1) and (2), we can derive that
     $\LF (\forma \wedge \formb) = \sem{N}$. Recall that elimination via E1 is integrated
     as part of rewriting (see Definition~\ref{def:decomp-elim}).

 \textbf{Case}$\releaseL{\forma}{\formb}$.
    By assumption $\releaseL{\forma}{\formb} \rightarrowtail \{ \{ \formb, \forma \vee \nextL{(\releaseL{\forma}{\formb})} \} \} \rightarrowtail \{ \{ \formb, \forma \}, \{ \formb, \nextL{(\releaseL{\forma}{\formb})} \} \}
            \rightarrowtail^* N$.
    Via pretty much the same reasoning as in case of conjunction, we can establish
    that $\LF (\releaseL{\forma}{\formb}) = \sem{N}$

    For brevity, we skip the remaining cases. They all follow the same pattern.
\end{proof}

\end{document}